\pdfoutput=1
\documentclass[runningheads]{llncs}
\usepackage{amsmath}
\allowdisplaybreaks
\usepackage{amssymb}
\usepackage{bm}
\usepackage{graphicx}
\usepackage{booktabs}
\usepackage{cite}
\graphicspath{{../../eddr/figs/no_sips/}{../../eddr/figs/}{./figs/}}
\usepackage[colorlinks=true,allcolors=blue]{hyperref}

\usepackage{microtype}

\makeatletter
\let\corollary\@undefined \let\endcorollary\@undefined
\let\c@corollary\@undefined \let\thecorollary\@undefined
\makeatother
\spnewtheorem{corollary}{Corollary}[proposition]{\bfseries}{\itshape}
\renewcommand{\thecorollary}{\theproposition.\arabic{corollary}}

\begin{document}
\title{Recommendation Ranking Off-Policy Evaluation\texorpdfstring{\\}{ }
under Ranking-Dependent Examination\texorpdfstring{\\}{ }
via Examination-Relevance Decomposition}
\titlerunning{Recommendation Ranking OPE under Ranking-Dependent Examination}
    \author{Riki Okamura \and
            Toshiharu Sugawara}
  \authorrunning{R. Okamura and T. Sugawara}
  \institute{Department of Computer Science and Communications Engineering,
    Waseda University, Tokyo, Japan\\
    \email{r.okamura@isl.cs.waseda.ac.jp}, \email{sugawara@waseda.jp}}

\maketitle

\begin{abstract}
  \emph{Off-policy evaluation}, which estimates evaluation policy performance from logged data, is key for recommender ranking policies. However, logged clicks cannot distinguish unexamined items from examined non-clicks, causing bias in existing estimators when the assumed examination structures fail. We propose two estimators based on the decomposition of clicks into examination and relevance. First, the \emph{latent-examination independent inverse propensity score} (LE-IIPS) estimator corrects the IIPS bias using policy examination probability ratios. Second, the \emph{examination-decomposed doubly robust} (ED-DR) estimator extends LE-IIPS to a doubly robust framework. ED-DR is unbiased if the examination probabilities are correct regardless of relevance accuracy, or under ranking-independent examination, even if both model estimates are inaccurate. Experiments show that ED-DR achieves a lower MSE than existing methods with large sample sizes, especially when the examination depends on ranking. We also highlight its limitations under small samples or \emph{cascade user behavior} conditions.

\keywords{Off-policy evaluation \and Recommender systems \and Ranking \and Click models}
\end{abstract}
\section{Introduction}\label{sec:introduction}
Ranking policies are widely used in recommender systems, such as those
of Amazon and Netflix. When introducing a new evaluation policy into
such a system, assessing its value through online experiments, such as
A/B tests, entails risks to user experience and operational
costs. \emph{Off-policy evaluation} (OPE) avoids these risks and costs
by estimating the value of an evaluation policy from logged data
collected by a deployed logging policy~\cite{strehl2010, dudik2014}.
\par

Logs from ranking policies record clicks at each position, but an
unclicked item does not reveal whether the user examined it and chose
not to click it or never examined it at all. Even interaction logs,
such as scrolling records, cannot confirm whether a displayed item was
examined. Despite this ambiguity, existing ranking OPE estimators,
such as \emph{independent inverse propensity score}
(IIPS)~\cite{iips}, \emph{reward-interaction inverse propensity
scoring} (reward-interaction IPS, RIPS)~\cite{rips}, \emph{cascade
doubly robust} (cascade DR)~\cite{cascadedr}, and \emph{adaptive IPS}
(AIPS)~\cite{aips}, describe user behavior solely by which items
affect the click at each position, without explicitly modeling
examination. For instance, IIPS assumes that the click at a position
depends only on the item presented there, an assumption that holds
under the \emph{position-based model} (PBM) where examination depends
only on position. However, when examination probabilities depend on
the broader ranking, such as when neighboring items draw attention
away, existing estimators become biased.
\par

In this study, we propose two estimators based on a click model that
decomposes clicks into examination and relevance~\cite{craswell2008,
  clickmodels}. First, the \emph{latent-examination IIPS} (LE-IIPS)
corrects the IIPS bias using policy examination probability
ratios. Second, the \emph{examination-decomposed doubly robust}
(ED-DR) estimator extends LE-IIPS to a \emph{doubly robust} (DR)
framework, maintaining unbiasedness when examination probabilities are
correctly estimated or under ranking-independent examination. Our
experiments using synthetic data show that, with large samples, ED-DR
achieves the lowest MSE among the compared estimators, especially when
examination depends on ranking. However, it underperforms existing
methods with small samples, and under \emph{cascade user behavior}, it exhibits
higher relative MSE than RIPS and cascade DR, clarifying both its
advantages and limitations.%
\footnote{\url{https://github.com/RickHub0115/ranking-policy-ope}}
\par
\section{Related Work}
Standard OPE estimators are the \emph{direct method}
(DM)~\cite{beygelzimer2009}, \emph{inverse propensity score}
(IPS)~\cite{horvitz1952}, and DR~\cite{dudik2011}. DM averages the
reward predictions under the evaluation policy and is biased when the
predictions are inaccurate. IPS reweights the observed rewards by the
ratio of the evaluation and logging policies'
probabilities~\cite{precup2000, strehl2010}; it is unbiased under
common support, but its variance grows as the two policies diverge. DR
combines the two to control both bias and variance~\cite{dudik2014}.
\par

In slate and ranking recommendations, where multiple items are
recommended simultaneously, the vast action space causes the variance
of IPS and DR to explode. For slate recommendation, where only a total
reward is observed, the \emph{pseudoinverse} estimator~\cite{pi}
decomposes the expected reward into per-slot contributions, a control
variate method reduces variance~\cite{vlassis}, and \emph{latent
IPS}~\cite{lips} compresses slates into low-dimensional
representations. For ranking recommendation, where rewards are
observed per position, IIPS~\cite{iips} assumes that a position's
reward depends only on the item presented there. RIPS~\cite{rips} and
cascade DR~\cite{cascadedr} relax this to a cascade assumption, where
rewards can also depend on higher-positioned items, while
AIPS~\cite{aips} selects behavior assumptions adaptively per context.
\par

Click models that decompose clicks into examination and relevance are
foundational in information
retrieval~\cite{clickmodels}. PBM~\cite{richardson2007,
  craswell2008} assumes examination depends solely on position. The
\emph{cascade model}~\cite{craswell2008} assumes a sequential
top-to-bottom examination that stops at the first relevant item. The
\emph{dynamic Bayesian network} (DBN) model~\cite{dbn} incorporates
post-click satisfaction, assuming that users leave only when
satisfied.
\par

Building on these click models, \emph{unbiased learning to rank}
(ULTR) learns ranking models from clicks weighted by inverse
examination probabilities~\cite{ultr}, which are often estimated via
regression EM~\cite{regem} or extended with DR-type
estimators~\cite{drultr}. Although ULTR leverages examination
estimates to optimize ranking algorithms, our study focuses on
off-policy evaluation, directly estimating the expected reward under a
target evaluation policy.
\par

In recommendation, observed feedback is biased because users select
which items to rate; \emph{unbiased recommender learning} corrects
this bias by reweighting feedback with observation
probabilities~\cite{rat, debiassurvey}. In particular, Saito et
al.~\cite{mnar} decomposed clicks into examination and relevance to
minimize the loss of unobserved relevance via inverse examination
weighting. However, these studies address missing user--item feedback
rather than evaluating policy performance.
\par

In summary, while ULTR and \emph{unbiased recommender learning} use
click decomposition to debias training, no prior OPE work explicitly
models examination and relevance as random variables. This study
bridges this gap by embedding this decomposition into OPE estimators.
\par
\section{Preliminaries}
We formalize OPE for rankings and review the IIPS estimator, which
underpins our proposed methods. The logged dataset $\mathcal{D} =
\{(x_i, \bm{a}_i, \bm{Y}_i)\}_{i=1}^{n}$ is generated by
\begin{equation}
  p(\mathcal{D}) = \prod_{i=1}^{n}
  p(x_i)
  \pi_0(\bm{a}_i \mid x_i)
  p(\bm{Y}_i \mid x_i, \bm{a}_i),
  \label{eq:dgp}
\end{equation}
where $n$ is the sample size and $x \sim p(x)$ denotes the context
(e.g., user profile). A ranking $\bm{a} = (\bm{a}(1),\dots,\bm{a}(K))
\in \Pi_K(\mathcal{A})$ of distinct items $\bm{a}(k) \in \mathcal{A}$
is chosen by the logging policy $\pi_0(\bm{a} \mid x)$. The vector
$\bm{Y} = (Y_1,\dots,Y_K)$ captures clicks, with $Y_k \in \{0,1\}$
indicating a click at position $k$. We define $q_k(x,\bm{a}) :=
\mathbb{E}[Y_k \mid x, \bm{a}]$.
\par

We define the value of an evaluation policy $\pi$ as
\begin{equation}
  V(\pi)
  := \mathbb{E}_{p(x)\,\pi(\bm{a} \mid x)\,p(\bm{Y} \mid x,\bm{a})}
  \!\left[\sum_{k=1}^{K} \alpha_k\, Y_k\right],
  \label{eq:policy-value}
\end{equation}
where $\alpha_k \ge 0$ is a position weight determined by the
provider; for example, $\alpha_k = 1$ gives the total number of
clicks, and $\alpha_k = 1/\log_2(k+1)$ gives DCG. We measure the
accuracy of an estimator $\hat V(\pi; \mathcal{D})$ of $V(\pi)$ by
$\mathrm{MSE}[\hat V] := \mathbb{E}_{p(\mathcal{D})}\big[(\hat
  V(\pi;\mathcal{D}) - V(\pi))^2\big] = \mathrm{Bias}(\hat V)^2 +
\mathbb{V}(\hat V)$, where $\mathrm{Bias}(\hat V) := \mathbb{E}[\hat
  V] - V(\pi)$ and $\mathbb{V}(\hat V) :=
\mathbb{E}\big[(\mathbb{E}[\hat V] - \hat V)^2\big]$.
\par

The IIPS estimator~\cite{iips}, a representative method for ranking OPE, assumes the following independence condition for clicks at position $k$:
\begin{equation}
  \forall x, \bm{a}, k,\quad
  \mathbb{E}[Y_k \mid x, \bm{a}]
  = \mathbb{E}[Y_k \mid x, \bm{a}(k)],
  \tag{A1}
  \label{as:a1}
\end{equation}
and is defined as
\begin{equation}
  \hat V_{\mathrm{IIPS}}(\pi; \mathcal{D})
  := \frac{1}{n}\sum_{i=1}^{n}\sum_{k=1}^{K}
  w_k^{\mathrm{IIPS}}(x_i, \bm{a}_i)
  \alpha_k Y_{i,k},
  \label{eq:iips}
\end{equation}
where $w_k^{\mathrm{IIPS}}(x, \bm{a}) := \pi(\bm{a}(k) \mid x,
k)/\pi_0(\bm{a}(k) \mid x, k)$ is the ratio of the marginal
probabilities $\pi(a \mid x, k) := \sum_{\bm{a}' \in
  \Pi_K(\mathcal{A})} \pi(\bm{a}' \mid x)\,\mathbb{I}\{\bm{a}'(k) =
a\}$ and $\pi_0(a \mid x, k)$ that item $a$ is presented at position
$k$. Under (A1) and the {\em marginal common support condition}
\begin{equation}
  \forall x, a, k,\quad
  \pi(a \mid x, k) > 0 \Rightarrow \pi_0(a \mid x, k) > 0,
  \tag{S1}
  \label{as:s1}
\end{equation}
IIPS is unbiased for $V(\pi)$. However, when the examination at
position $k$ depends on the full ranking, $\mathbb{E}[Y_k \mid x,
  \bm{a}]$ depends on items beyond $\bm{a}(k)$ alone; thus, (A1)
fails, and IIPS suffers from bias. To explicitly model this
dependence, Section~\ref{sec:clickmodel} introduces our click-model
framework.
\par
\section{Proposed Method}
\subsection{Click Models}\label{sec:clickmodel}
To resolve the ambiguity in $Y_k = 0$
(Section~\ref{sec:introduction}), we follow standard click
models~\cite{craswell2008, clickmodels} and formulate
\begin{equation}
  Y_k = O_k\, R_k,
  \label{eq:click-gen}
\end{equation}
where the latent variables $O_k \sim
\mathrm{Bern}\big(e_k(x,\bm{a})\big)$ and $R_k \sim
\mathrm{Bern}\big(r(x,\bm{a}(k))\big)$ denote examination and
relevance, respectively. Examination $e_k(x,\bm{a})$ may depend on
context and the full ranking, while relevance $r(x,\bm{a}(k))$ depends
only on context and item $\bm{a}(k)$. Thus, $Y_k = 0$ splits into
non-examination ($O_k = 0$) and non-relevance ($O_k = 1, R_k = 0$). We
assume:
\begin{equation}
  \forall x, \bm{a}, k,\quad
  O_k \perp R_k \mid x, \bm{a}
  \tag{A2}
  \label{as:a2}
\end{equation}
\begin{equation}
  \forall x, \bm{a}, k,\quad
  q_k(x,\bm{a}) = e_k(x,\bm{a})\, r(x,\bm{a}(k)).
  \tag{A3}
  \label{as:a3}
\end{equation}
Assumption (A3) follows from Eq.~\eqref{eq:click-gen} and (A2); see
Appendix~\ref{app:a3-derivation} for details.
\par

We classify click models by the variables governing $e_k$:
\begin{itemize}
\item[] (E1) PBM, $e_k(x,\bm{a}) = \theta_k$;
\item[] (E2) \emph{contextual PBM} (CPBM), $e_k(x,\bm{a}) = \theta_k(x)$; and
\item[] (E3) ranking-dependent examination,
\end{itemize}
where $e_k(x,\bm{a})$ depends on $\bm{a}$. Since $q_k(x,\bm{a}) =
e_k(x,\bm{a})\, r(x,\bm{a}(k))$ by (A3) and $r$ depends only on
$\bm{a}(k)$, (A1) holds if and only if $e_k$ does not depend on
$\bm{a}$: IIPS is unbiased under (E1) and (E2), but biased under
(E3). This work focuses on (E3), the most general case. In contrast,
the cascade model and DBN fall outside (E1)--(E3) because examination
at position $k$ depends on realized relevance at higher positions.
\par

Finally, the decomposition $q_k = e_k r$ in (A3) raises two
identifiability issues. First, scaling $(e_k, r) \to (c\, e_k, r/c)$
leaves $q_k$ unchanged; however, this ambiguity is harmless because
$V(\pi)$ depends on $(e_k, r)$ solely through $q_k$. Second,
identifying $e_k$ requires presenting the same item across multiple
positions with positive probability~\cite{ih2019, twotower}, which is
satisfied by stochastic policies (e.g., the \emph{Plackett--Luce}
model) but not deterministic ones. Lastly, $(e_k, r)$ is
unidentifiable when (A2) fails.
\subsection{Proposed Estimators}
Based on Assumption (A3), we propose two estimators. The first
estimator is LE-IIPS, which corrects the bias of IIPS under the click
model (E3) by the ratio of examination probabilities under the
evaluation and logging policies:
\begin{equation}
\hat V_{\mathrm{LE}}(\pi; \mathcal{D})
:= \frac{1}{n}\sum_{i=1}^{n}\sum_{k=1}^{K}
\hat w_k(x_i, \bm{a}_i)\,
\alpha_k\, Y_{i,k},
\label{eq:le-iips}
\end{equation}
where
\begin{equation}
\hat w_k(x,\bm{a})
:= \frac{\pi(\bm{a}(k) \mid x, k)}{\pi_0(\bm{a}(k) \mid x, k)}\,
\frac{\hat{\bar e}_k^{\pi}(x,\bm{a}(k))}{\hat e_k(x,\bm{a})},
\label{eq:what}
\end{equation}
$\hat e_k(x,\bm{a})$ estimates $e_k(x,\bm{a})$, 
and \mbox{$\hat{\bar e}_k^{\pi}(x, a) := \mathbb{E}_{\bm{a} \sim \pi(\bm{a} \mid x)}
  [\hat e_k(x, \bm{a}) \mid \bm{a}(k) = a]$} is the expectation of
$\hat e_k$ under evaluation policy $\pi$. When $\hat e_k = e_k$, we
write $\hat{\bar e}_k^{\pi}(x, a) = \mathbb{E}_{\bm{a} \sim \pi(\bm{a}
  \mid x)} [e_k(x, \bm{a}) \mid \bm{a}(k) = a] =: \bar e_k^{\pi}(x,
a)$. Under (E1) and (E2), $\hat e_k(x,\bm{a}) = \hat\theta_k(x)$ does
not depend on $\bm{a}$, so $\hat{\bar e}_k^{\pi}(x, \bm{a}(k)) = \hat
e_k(x, \bm{a})$ and LE-IIPS coincides with IIPS.
\par

However, because LE-IIPS ignores non-clicked observations ($Y_{i,k} =
0$), we define our second estimator, ED-DR, to utilize all
observations:
\begin{equation}
\hat V_{\mathrm{ED\text{-}DR}}(\pi; \mathcal{D})
:= \frac{1}{n}\sum_{i=1}^{n}\Big(
\mathbb{E}_{\pi(\bm{a} \mid x_i)}
\Big[\sum_{k=1}^{K} \alpha_k\, \hat q_k(x_i,\bm{a})\Big]
+ \sum_{k=1}^{K} \alpha_k\, \hat w_k
\big(Y_{i,k} - \hat q_k\big)
\Big),
\label{eq:eddr}
\end{equation}
where $\hat q_k(x,\bm{a}) := \hat e_k(x,\bm{a})\hat
r(x,\bm{a}(k))$, $\hat r(x,\bm{a}(k))$ estimates $r(x,\bm{a}(k))$, and
$\hat w_k$ is the weight in Eq.~\eqref{eq:what}. In the second term,
the arguments of $\hat w_k$ and $\hat q_k$ are $(x_i,\bm{a}_i)$.
\par

Conventional DR estimators directly regress expected clicks, whereas
ED-DR decomposes them into examination and relevance. This provides
two main advantages. First, because $\hat r(x,\bm{a}(k))$ is
position-independent, it pools data across all positions where an item
appears, improving the estimation accuracy for rare item-position
pairs. Second, ED-DR remains unbiased whenever $\hat e_k = e_k$,
regardless of $\hat r$'s accuracy (Proposition~\ref{prop:identity} in
Section~\ref{sec:theory}).
\par

Finally, because neither examination $O_k$ nor relevance $R_k$ is
observed, we estimate $(\hat e_k, \hat r)$ by regression
EM~\cite{regem}, with inputs $(x, k, \bm{a})$ and $(x, \bm{a}(k))$,
respectively, as in (A3). We approximate $\hat{\bar e}_k^{\pi}(x_i,
a)$ and the DM term of Eq.~\eqref{eq:eddr} by Monte Carlo averages
over $S$ rankings drawn from $\pi(\bm{a} \mid x_i)$, and use
cross-fitting~\cite{dml} to train $(\hat e_k, \hat r)$ and evaluate
the estimators.
\par
\subsection{Theoretical Analysis}\label{sec:theory}
We analyze the bias and variance of the proposed estimators and compare them with the MSEs of IIPS and ED-DR under (E3). Assuming cross-fitting, $(\hat e_k, \hat r)$ are learned on independent folds and treated as deterministic functions. Function arguments are omitted when they are clear from the context. 
All proofs are given in Appendix~\ref{app:proofs}.

\begin{proposition}[Unbiasedness of LE-IIPS]\label{prop:le}
Under (A2), (A3), and (S1), if $\hat e_k = e_k$, then
$\mathbb{E}_{p(\mathcal{D})}[\hat V_{\mathrm{LE}}(\pi; \mathcal{D})] = V(\pi)$.
\end{proposition}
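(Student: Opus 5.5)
The plan is to reduce the claim to a single data point and fix the context and position. Since $\hat V_{\mathrm{LE}}$ is an average of i.i.d.\ terms under Eq.~\eqref{eq:dgp}, linearity gives
\[
\mathbb{E}_{p(\mathcal{D})}[\hat V_{\mathrm{LE}}]=\mathbb{E}_{p(x)}\Big[\sum_{k=1}^K\alpha_k\,\mathbb{E}_{\pi_0(\bm{a}\mid x)}\big[w_k(x,\bm{a})\,\mathbb{E}[Y_k\mid x,\bm{a}]\big]\Big],
\]
where $w_k$ denotes $\hat w_k$ with $\hat e_k=e_k$, so that $\hat{\bar e}_k^\pi=\bar e_k^\pi$. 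It therefore suffices to show, for each fixed $x$ and $k$,
\[
\mathbb{E}_{\pi_0(\bm{a}\mid x)}[w_k Y_k]=\mathbb{E}_{\pi(\bm{a}\mid x)}[q_k(x,\bm{a})].
\]

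First, replace $Y_k$ by $q_k=e_k\,r(x,\bm{a}(k))$ using (A3), which itself comes from (A2) and Eq.~\eqref{eq:click-gen}. The factor $e_k(x,\bm{a})$ in the denominator of $w_k$ then cancels exactly against the examination factor in $q_k$. What remains is
\[
\mathbb{E}_{\pi_0}\Big[\tfrac{\pi(\bm{a}(k)\mid x,k)}{\pi_0(\bm{a}(k)\mid x,k)}\,\bar e_k^\pi(x,\bm{a}(k))\,r(x,\bm{a}(k))\Big],
\]
which depends on $\bm{a}$ only through $\bm{a}(k)$. I will have to assume $e_k>0$ wherever $\pi_0$ places mass, so that the ratio is well defined; this is implicit in the definition of $\hat w_k$.

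Second, marginalize over $\bm{a}(k)=a$ under $\pi_0$. This gives $\sum_a \pi_0(a\mid x,k)\cdot\frac{\pi(a\mid x,k)}{\pi_0(a\mid x,k)}\bar e_k^\pi(x,a)\,r(x,a)$. By (S1), every $a$ with $\pi(a\mid x,k)>0$ has $\pi_0(a\mid x,k)>0$, so the ratio cancels to $\sum_a\pi(a\mid x,k)\,\bar e_k^\pi(x,a)\,r(x,a)$. Terms with $\pi(a\mid x,k)=0$ contribute zero on both sides.

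Third, undo the marginalization under $\pi$. By the definition of $\bar e_k^\pi$ as a conditional expectation under $\pi$, and since $r$ depends only on $a$, the tower property gives
\[
\sum_a\pi(a\mid x,k)\,\mathbb{E}_\pi[e_k(x,\bm{a})\mid\bm{a}(k)=a]\,r(x,a)=\mathbb{E}_{\pi}[e_k(x,\bm{a})\,r(x,\bm{a}(k))]=\mathbb{E}_\pi[q_k].
\]
Summing over $k$ with weights $\alpha_k$ and averaging over $p(x)$ then yields $V(\pi)$ via Eq.~\eqref{eq:policy-value}.

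I expect the main obstacle to be careful bookkeeping rather than any deep step. The key point is that the logging-side examination $e_k$ cancels before marginalizing, leaving a function of $\bm{a}(k)$ only, which is what makes the marginal importance ratio valid. The zero-probability and positivity cases must also be handled explicitly.
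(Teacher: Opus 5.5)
Your proposal is correct and follows the paper's proof essentially step for step. The i.i.d.\ reduction and the cancellation of $e_k$ via (A3) match, and your second and third steps are exactly the paper's Lemma~\ref{lem:transport} (marginal reweighting under (S1)) and Lemma~\ref{lem:tower} (the tower property for $\bar e_k^{\pi}$), written inline and followed by (A3) again; your remark that $e_k>0$ must hold on the support of $\pi_0$ for $\hat w_k$ to be well defined is a detail the paper leaves implicit.
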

The bias of LE-IIPS with respect to the error in $\hat e_k$ is bounded
by
\begin{equation}
  \big|\mathrm{Bias}\big(\hat V_{\mathrm{LE}}\big)\big|
  \le \sum_{k=1}^{K} \alpha_k\,
  \mathbb{E}\left[
    w_k^{\mathrm{IIPS}}
    \left| \frac{\hat{\bar e}_k^{\pi}}{\hat e_k}
    - \frac{\bar e_k^{\pi}}{e_k} \right|
    e_k\, r
    \right],
  \label{eq:le-bias-bound}
\end{equation}
which depends on the estimation error in the ratio $\bar
e_k^{\pi}/e_k$ rather than $\hat e_k$ itself.
\par

\begin{proposition}[Bias of ED-DR]\label{prop:identity}
Under (A2) and (A3), (S1), and the full common support condition (S2): $\forall x, \bm{a},\ \pi(\bm{a} \mid x) > 0 \Rightarrow \pi_0(\bm{a} \mid x) > 0$,
\begin{equation}
\mathrm{Bias}\big(\hat V_{\mathrm{ED\text{-}DR}}\big)
= \mathbb{E}_{p(x)\,\pi_0(\bm{a} \mid x)}\left[
\sum_{k=1}^{K} \alpha_k\,
\big(\hat w_k - w^{*}\big)\,\Delta_k
\right]
\label{eq:bias-identity}
\end{equation}
holds, where $w^{*}(x,\bm{a}) := \pi(\bm{a} \mid x)/\pi_0(\bm{a} \mid x)$ and $\Delta_k(x,\bm{a}) := q_k(x,\bm{a}) - \hat q_k(x,\bm{a})$. 
\end{proposition}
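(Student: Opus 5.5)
Since the $n$ summands of $\hat V_{\mathrm{ED\text{-}DR}}$ are i.i.d.\ under Eq.~\eqref{eq:dgp} and, by cross-fitting, $(\hat e_k,\hat r)$ and hence $\hat w_k$ and $\hat q_k$ are fixed functions, I will compute the expectation of a single summand.

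\textbf{Step 1 (condition on $\bm{Y}$).} In the correction term, I will take the expectation over $\bm{Y}$ given $(x,\bm{a})$ first. Since $\hat w_k$ and $\hat q_k$ depend only on $(x,\bm{a})$, this replaces $Y_k$ by $q_k(x,\bm{a})$. The correction term therefore becomes
\[
\mathbb{E}_{p(x)\pi_0(\bm{a}\mid x)}\Big[\sum_k \alpha_k \hat w_k \Delta_k\Big].
\]
The factorization in (A3) is not needed for this step; it only fixes the meaning of $q_k$.

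\textbf{Step 2 (rewrite the target and the DM term).} The target is
\[
V(\pi)=\mathbb{E}_{p(x)\pi(\bm{a}\mid x)}\Big[\sum_k\alpha_k q_k\Big],
\]
and the expected DM term is $\mathbb{E}_{p(x)\pi}\big[\sum_k\alpha_k\hat q_k\big]$. Their difference is
\[
\mathbb{E}_{p(x)}\mathbb{E}_{\pi(\bm{a}\mid x)}\Big[\sum_k\alpha_k\Delta_k\Big],
\]
which enters the bias with a minus sign. By (S2), every $\bm{a}$ with $\pi(\bm{a}\mid x)>0$ also has $\pi_0(\bm{a}\mid x)>0$. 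A change of measure therefore gives
\[
\mathbb{E}_{\pi}[\Delta_k]=\mathbb{E}_{\pi_0}[w^{*}\Delta_k],
\]
with the sum running only over $\pi_0$-support. On the $\pi_0$-support, any $\bm{a}$ with $\pi(\bm{a}\mid x)=0$ contributes zero through $w^*=0$.

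\textbf{Step 3 (combine).} Adding the two pieces gives
\[
\mathrm{Bias}=\mathbb{E}_{p(x)\pi_0}\Big[\sum_k\alpha_k(\hat w_k-w^{*})\Delta_k\Big],
\]
which is exactly Eq.~\eqref{eq:bias-identity}.

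\textbf{Main obstacle.} The only delicate point is well-definedness of $\hat w_k$ on the $\pi_0$-support. Its factor $\pi(\bm{a}(k)\mid x,k)/\pi_0(\bm{a}(k)\mid x,k)$ is well defined on that support, since any logged $\bm{a}$ has $\pi_0(\bm{a}(k)\mid x,k)>0$; (S1) is what guarantees that no mass under $\pi$ is lost at the marginal level. Its other factor, $\hat{\bar e}^\pi_k/\hat e_k$, is well defined provided $\hat e_k>0$, which I will state as an implicit assumption. Beyond this, the argument is a direct linearity computation, and (A2) and (A3) only serve to interpret $\hat q_k=\hat e_k\hat r$ as an estimate of $q_k$.
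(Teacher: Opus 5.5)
Your proposal is correct and follows the paper's proof essentially step for step. You replace $Y_k$ by $q_k$ in the correction term, write the expected DM term as $V(\pi)-\mathbb{E}_{p(x)\pi(\bm{a}\mid x)}[\sum_k\alpha_k\Delta_k]$, and move that expectation to $\pi_0$ via $w^{*}$. Justifying that last step directly by (S2) is more accurate than the paper, which cites Lemma~\ref{lem:transport} there, although that lemma is a position-wise marginal identity for functions of $\bm{a}(k)$, whereas this step needs a full-ranking change of measure.
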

Thus, the bias equals the expected product of the weight error $\hat
w_k - w^{*}$ and the reward prediction error $\Delta_k$. We obtain the
following corollaries.
\par

\begin{corollary}[Unbiasedness under correct examination probabilities]
\label{cor:e-side}
Under the assumptions of Proposition~\ref{prop:identity}, if $\hat e_k
= e_k$, then $\mathrm{Bias}(\hat V_{\mathrm{ED\text{-}DR}}) = 0$.
\end{corollary}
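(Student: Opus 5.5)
Starting from the bias identity of Proposition~\ref{prop:identity}, I will show that when $\hat e_k = e_k$ each summand $\mathbb{E}_{p(x)\pi_0}[(\hat w_k - w^{*})\Delta_k]$ vanishes. With correct examination, $\Delta_k(x,\bm{a}) = e_k(x,\bm{a})\,\delta(x,\bm{a}(k))$, where $\delta(x,a) := r(x,a) - \hat r(x,a)$. This uses (A3) and $\hat q_k = e_k \hat r$. The relevance error thus depends on the ranking only through the item $\bm{a}(k)$. The plan is to condition on $\bm{a}(k)=a$ and compare the two weighted terms separately.

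\textbf{The $w^{*}$ term.} By (S2), importance weighting gives
\[
\mathbb{E}_{\pi_0}[w^{*} e_k \delta \mid x] = \mathbb{E}_{\pi}[e_k(x,\bm{a})\,\delta(x,\bm{a}(k)) \mid x].
\]
Conditioning on $\bm{a}(k)=a$ turns this into
\[
\sum_a \pi(a\mid x,k)\,\delta(x,a)\,\bar e_k^{\pi}(x,a),
\]
by the definition of $\bar e_k^{\pi}$.

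\textbf{The $\hat w_k$ term.} When $\hat e_k = e_k$, the weight is $\hat w_k = \frac{\pi(\bm{a}(k)\mid x,k)}{\pi_0(\bm{a}(k)\mid x,k)}\frac{\bar e_k^{\pi}(x,\bm{a}(k))}{e_k(x,\bm{a})}$. Multiplying by $\Delta_k = e_k\delta$ cancels the $e_k(x,\bm{a})$ factor. The product is then a function of $(x,\bm{a}(k))$ only:
\[
\frac{\pi(\bm{a}(k)\mid x,k)}{\pi_0(\bm{a}(k)\mid x,k)}\,\bar e_k^{\pi}(x,\bm{a}(k))\,\delta(x,\bm{a}(k)).
\]
Taking the expectation under $\pi_0$ and marginalizing to $\pi_0(a\mid x,k)$ gives
\[
\sum_{a:\pi_0(a\mid x,k)>0} \pi(a\mid x,k)\,\bar e_k^{\pi}(x,a)\,\delta(x,a).
\]
By (S1), this sum covers every $a$ with $\pi(a\mid x,k)>0$, so it equals the $w^{*}$ term. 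The difference is therefore zero for every $x$ and $k$, and summing with weights $\alpha_k$ gives zero bias.

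\textbf{Main obstacle.} The delicate step is the cancellation of $e_k(x,\bm{a})$ in the denominator. Where $e_k(x,\bm{a})=0$, the weight is undefined, so I will adopt the convention that $\hat w_k\Delta_k := 0$ there. This is consistent because $\Delta_k=0$ on that set, and I will state the convention explicitly. The (S1) support argument also needs care to ensure that no mass of $\pi$ is lost when passing to the $\pi_0$-marginal.
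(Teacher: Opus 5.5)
On rankings with $e_k(x,\bm{a})>0$, your argument is the paper's proof. The paper also writes $\Delta_k=e_k(r-\hat r)$ and cancels $e_k$ in $\hat w_k\Delta_k$. It then uses Lemma~\ref{lem:transport} to obtain $\mathbb{E}_{\pi}[\bar e_k^{\pi}(r-\hat r)]$, reduces the $w^{*}$ term to $\mathbb{E}_{\pi}[e_k(r-\hat r)]$, and equates the two with Lemma~\ref{lem:tower}. That last lemma is exactly your conditioning on $\bm{a}(k)=a$.

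\textbf{The gap is your handling of $e_k(x,\bm{a})=0$.} Your convention $\hat w_k\Delta_k:=0$ there is the natural one, since any finite weight times $Y_k-\hat q_k=0$ is zero. But it breaks your own cancellation step. Under it, $\hat w_k\Delta_k=w_k^{\mathrm{IIPS}}\,\bar e_k^{\pi}(x,\bm{a}(k))\,\delta(x,\bm{a}(k))\,\mathbb{I}\{e_k(x,\bm{a})>0\}$, which is no longer a function of $(x,\bm{a}(k))$ alone. The cancelled expression $w_k^{\mathrm{IIPS}}\bar e_k^{\pi}\delta$ need not vanish where $e_k=0$, because $\bar e_k^{\pi}$ averages $e_k$ over other rankings under $\pi$ and $\delta$ depends only on the item. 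Your $\hat w_k$ term therefore picks up the extra factor $\mathbb{E}_{\pi_0(\bm{a}\mid x)}[\mathbb{I}\{e_k(x,\bm{a})>0\}\mid \bm{a}(k)=a]$ inside the sum over $a$, and it no longer matches the $w^{*}$ term.

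A concrete counterexample:
\begin{itemize}
\item Take $K=2$, three items, and $\pi_0$ uniform on the six rankings.
\item Let $\pi$ be a point mass on $(3,1)$.
\item Set $e_1\equiv 1$, $e_2(x,(2,1))=0$, and $e_2=1$ otherwise.
\end{itemize}
At $k=1$ the two terms coincide. At $k=2$ the $\hat w_k$ term is $\delta(x,1)/2$ while the $w^{*}$ term is $\delta(x,1)$. So the bias is $-\alpha_2\delta(x,1)/2\neq 0$ whenever $\hat r(x,1)\neq r(x,1)$.

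No convention rescues this. The result needs the positivity condition $e_k(x,\bm{a})>0$ for all $\bm{a}$ in the support of $\pi_0$. That condition is required anyway for $\hat w_k$ to be defined, and the paper tacitly assumes it when it cancels $e_k$; its experimental $e_k$ are strictly positive. State that assumption in place of the convention, and your proof is complete.
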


\begin{corollary}[Unbiasedness under (E1) and (E2)]\label{cor:r-side}
Under the assumptions of Proposition 2, if the examination
probabilities satisfy (E1) or (E2), then $\mathrm{Bias}(\hat V_{\mathrm{ED\text{-}DR}}) = 0$.
\end{corollary}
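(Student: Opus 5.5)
We start from the bias identity of Proposition~\ref{prop:identity} and show that, under (E1) or (E2), the right-hand side of Eq.~\eqref{eq:bias-identity} vanishes term by term in $k$, for any $\hat r$.

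Under (E1)/(E2) we take the examination model to share the same structure, so $e_k(x,\bm{a}) = \theta_k(x)$ and $\hat e_k(x,\bm{a}) = \hat\theta_k(x)$, neither depending on $\bm{a}$. As noted after Eq.~\eqref{eq:what}, $\hat{\bar e}_k^{\pi}(x,\bm{a}(k)) = \hat e_k(x,\bm{a})$, so $\hat w_k = w_k^{\mathrm{IIPS}}(x,\bm{a})$. Also $\Delta_k(x,\bm{a}) = \theta_k(x) r(x,\bm{a}(k)) - \hat\theta_k(x)\hat r(x,\bm{a}(k)) =: g_k(x,\bm{a}(k))$, a function of $(x,\bm{a}(k))$ only. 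This requires $\hat e_k$ itself to be position/context-only, not merely $e_k$. If $\hat e_k$ were allowed to depend on $\bm{a}$ under (E2), the claim would fail in general. I would state this reading explicitly: the fitted model respects (E1)/(E2), so regression EM uses inputs $(x,k)$.

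Fix $x$ and $k$. It then suffices to show
\[
\mathbb{E}_{\pi_0(\bm{a}\mid x)}\big[w_k^{\mathrm{IIPS}}\, g_k(x,\bm{a}(k))\big]
= \mathbb{E}_{\pi_0(\bm{a}\mid x)}\big[w^{*}\, g_k(x,\bm{a}(k))\big].
\]
For the left side, condition on $\bm{a}(k)=a$, which gives $\sum_a \pi_0(a\mid x,k)\,\frac{\pi(a\mid x,k)}{\pi_0(a\mid x,k)}\,g_k(x,a) = \sum_a \pi(a\mid x,k)\,g_k(x,a)$. The sum runs over $a$ with $\pi_0(a\mid x,k)>0$, and by (S1) no $\pi$-mass is lost. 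For the right side, (S2) gives $\sum_{\bm{a}} \pi(\bm{a}\mid x)\,g_k(x,\bm{a}(k))$, which marginalizes to the same quantity $\sum_a \pi(a\mid x,k)g_k(x,a)$. Taking the expectation over $p(x)$ and summing with weights $\alpha_k$ yields zero bias.

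The main obstacle is the support bookkeeping, which is routine given (S1) and (S2), together with making the reading of $\hat e_k$ explicit as described above.
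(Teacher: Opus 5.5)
Your proof is correct and follows essentially the same route as the paper: reduce $\hat w_k$ to $w_k^{\mathrm{IIPS}}$ using $\hat e_k = \hat\theta_k(x)$, note that $\Delta_k$ depends on the ranking only through $\bm{a}(k)$, and transport both the $w_k^{\mathrm{IIPS}}$ term and the $w^{*}$ term to $\mathbb{E}_{\pi(\bm{a}\mid x)}[\Delta_k]$. The paper cites Lemma~\ref{lem:transport} for both terms, whereas you handle the $w^{*}$ term directly via (S2). Your remark that $\hat e_k$ itself must be ranking-independent is correct, and it is exactly the assumption the paper uses implicitly in the step it justifies by ``$\hat e_k = \hat\theta_k(x)$''.
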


These corollaries show that ED-DR is unbiased if the examination
probabilities are correctly estimated (Corollary~\ref{cor:e-side}) or
under ranking-independent examination (E1)/(E2) even with inaccurate
estimates (Corollary~\ref{cor:r-side}). Because weights $\hat w_k$
depend on $\hat e_k$, bias under (E3) persists when $\hat e_k \neq
e_k$ even if $\hat r = r$.  Thus,
relevance estimation errors affect only the variance, which we analyze next.
\par

\begin{proposition}[Variance analysis]
\label{prop:variance}
Under (S1), for any evaluation policy $\pi$,
\begin{align}
\mathbb{V}_{p(\mathcal{D})}\big(\hat V_{\mathrm{LE}}\big)
&= \frac{1}{n}\Big\{
\sigma_Y^2
+ \mathbb{E}_{p(x)}\Big[\mathbb{V}_{\pi_0(\bm{a} \mid x)}
\Big(\sum_{k=1}^{K} \alpha_k\, \hat w_k\, q_k\Big)\Big]
+ \sigma_x^2\Big\}
\label{eq:var-le}\\
\mathbb{V}_{p(\mathcal{D})}\big(\hat V_{\mathrm{ED\text{-}DR}}\big)
&= \frac{1}{n}\Big\{
\sigma_Y^2
+ \mathbb{E}_{p(x)}\Big[\mathbb{V}_{\pi_0(\bm{a} \mid x)}
\Big(\sum_{k=1}^{K} \alpha_k\, \hat w_k\, \Delta_k\Big)\Big]
+ \sigma_x^2\Big\}
\label{eq:var-eddr}
\end{align}
hold, where
\begin{equation*}
\sigma_Y^2 := \mathbb{E}_{p(x)\pi_0(\bm{a}|x)}
\big[\mathbb{V}_{p(\bm{Y}|x,\bm{a})}
\big(\textstyle\sum_{k} \alpha_k \hat w_k Y_k\big)\big],
\,
\sigma_x^2 := \mathbb{V}_{p(x)}
\big(\mathbb{E}_{\pi_0(\bm{a}|x)}
\big[\textstyle\sum_{k} \alpha_k \hat w_k q_k\big]\big).
\end{equation*}
\end{proposition}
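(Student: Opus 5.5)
Both estimators are averages of $n$ i.i.d. terms: under cross-fitting, $\hat e_k,\hat r$ and hence $\hat w_k,\hat q_k$ are fixed functions. So $\mathbb{V}_{p(\mathcal{D})}(\hat V)=\frac1n\mathbb{V}(Z)$, where $Z$ is the single-sample summand. For LE-IIPS it is $Z_{\mathrm{LE}}=\sum_k\alpha_k\hat w_k Y_k$. For ED-DR it is $Z_{\mathrm{DR}}=\hat m(x)+\sum_k\alpha_k\hat w_k(Y_k-\hat q_k)$, with $\hat m(x):=\mathbb{E}_{\pi(\bm{a}\mid x)}[\sum_k\alpha_k\hat q_k(x,\bm{a})]$. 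I would compute $\mathbb{V}(Z)$ by applying the law of total variance twice, first conditioning on $(x,\bm{a})$ and then on $x$, following the data-generating process \eqref{eq:dgp}. Condition (S1) is used only to make $\hat w_k$ finite on the support of $\pi_0$, so that all the moments are well defined.

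\textbf{LE-IIPS.} Write $\mathbb{V}(Z)=\mathbb{E}_{x,\bm{a}}[\mathbb{V}(Z\mid x,\bm{a})]+\mathbb{E}_x[\mathbb{V}_{\pi_0}(\mathbb{E}[Z\mid x,\bm{a}])]+\mathbb{V}_x(\mathbb{E}[Z\mid x])$.
The first term is $\sigma_Y^2$ by definition. Since $\hat w_k$ is deterministic given $(x,\bm{a})$, we have $\mathbb{E}[Z_{\mathrm{LE}}\mid x,\bm{a}]=\sum_k\alpha_k\hat w_k q_k$. This gives the middle term of \eqref{eq:var-le} directly, and the last term is $\sigma_x^2$ exactly as defined.

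\textbf{ED-DR.} The terms $\hat m(x)$ and $-\sum_k\alpha_k\hat w_k\hat q_k$ are deterministic given $(x,\bm{a})$. Hence $\mathbb{V}(Z_{\mathrm{DR}}\mid x,\bm{a})=\mathbb{V}(\sum_k\alpha_k\hat w_kY_k\mid x,\bm{a})$, which averages to the same $\sigma_Y^2$.

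Next, $\mathbb{E}[Z_{\mathrm{DR}}\mid x,\bm{a}]=\hat m(x)+\sum_k\alpha_k\hat w_k\Delta_k$. Its variance over $\bm{a}\sim\pi_0(\cdot\mid x)$ ignores the additive constant $\hat m(x)$, which gives the middle term of \eqref{eq:var-eddr}.

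Finally, $\mathbb{E}[Z_{\mathrm{DR}}\mid x]=\hat m(x)+\mathbb{E}_{\pi_0}[\sum_k\alpha_k\hat w_k(q_k-\hat q_k)]$. To match $\sigma_x^2$, I need $\hat m(x)=\mathbb{E}_{\pi_0}[\sum_k\alpha_k\hat w_k\hat q_k]$ for every $x$. Then $\mathbb{E}[Z_{\mathrm{DR}}\mid x]=\mathbb{E}_{\pi_0}[\sum_k\alpha_k\hat w_kq_k]$, and the third term is exactly $\sigma_x^2$.

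\textbf{Main obstacle.} The identity $\hat m(x)=\mathbb{E}_{\pi_0}[\sum_k\alpha_k\hat w_k\hat q_k]$ is the crux. I would prove it position by position using the definition of $\hat w_k$ in \eqref{eq:what}. First condition on $\bm{a}(k)=a$; under $\pi_0$ this has probability $\pi_0(a\mid x,k)$. Then $\hat q_k=\hat e_k\hat r(x,a)$ gives $\hat w_k\hat q_k=\frac{\pi(a\mid x,k)}{\pi_0(a\mid x,k)}\hat{\bar e}_k^\pi(x,a)\hat r(x,a)$, which depends on $\bm{a}$ only through $a$. Its $\pi_0$-expectation is therefore $\sum_a\pi(a\mid x,k)\hat{\bar e}_k^\pi(x,a)\hat r(x,a)$. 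By the definition of $\hat{\bar e}_k^\pi$ this equals $\mathbb{E}_\pi[\hat e_k\hat r]=\mathbb{E}_\pi[\hat q_k]$. The cancellation of $\hat e_k$ is what makes this work, and (S1) guarantees the ratio is well defined on the support. If this identity failed, the third term would not reduce to $\sigma_x^2$; that is why I expect it to be the delicate step, while everything else is bookkeeping.
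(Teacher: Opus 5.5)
Your proposal is correct and follows the paper's proof essentially step for step. It uses the same two-stage law of total variance, and its crux is the identity $\hat m(x)=\mathbb{E}_{\pi_0(\bm{a}\mid x)}[\sum_k\alpha_k\hat w_k\hat q_k]$, which the paper obtains from Lemmas~\ref{lem:transport} and~\ref{lem:tower} (with $\hat e_k$ in place of $e_k$) just as you do by conditioning on $\bm{a}(k)=a$. One small correction: $\hat w_k$ is automatically finite on the support of $\pi_0$ (given $\hat e_k>0$), so the real role of (S1) is the change of measure $\sum_{a:\pi_0(a\mid x,k)>0}\pi(a\mid x,k)(\cdots)=\sum_{a}\pi(a\mid x,k)(\cdots)$ in that identity, without which the third term would not reduce to $\sigma_x^2$.
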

Equations~\eqref{eq:var-le} and \eqref{eq:var-eddr} decompose the variance
into the click, logging-policy, and context components. Because the first
($\sigma_Y^2$) and third ($\sigma_x^2$) are identical for both
estimators, their variance difference depends solely on the second
term: weighted expected clicks $\sum_k \alpha_k \hat w_k q_k$ for
LE-IIPS versus weighted residuals $\sum_k \alpha_k \hat w_k \Delta_k$
for ED-DR. Thus, more accurate estimates $(\hat e_k, \hat r)$ that
minimize $\Delta_k$ yield a lower variance for ED-DR. 
\par

We next compare the variances of IIPS and ED-DR under (E1) and (E2),
where IIPS is already unbiased.

\begin{proposition}[Variance comparison]
\label{prop:var}
Under (S1), if examination probabilities satisfy (E1) or (E2), then for any evaluation policy $\pi$,
\begin{equation}
  \mathbb{V}_{p(\mathcal{D})}\big(\hat V_{\mathrm{ED\text{-}DR}}\big)
  - \mathbb{V}_{p(\mathcal{D})}\big(\hat V_{\mathrm{IIPS}}\big)
  = \frac{1}{n}\Big[
    \mathbb{V}_{p(x)\,\pi_0(\bm{a} \mid x)}(D - \hat D)
    - \mathbb{V}_{p(x)\,\pi_0(\bm{a} \mid x)}(D)\Big]
  \label{eq:var-comp}
\end{equation}
holds, where
$S := \sum_k \alpha_k\, w_k^{\mathrm{IIPS}}\, q_k$,
$\hat S := \sum_k \alpha_k\, w_k^{\mathrm{IIPS}}\, \hat q_k$,
$D := S - \mathbb{E}_{\pi_0(\bm{a} \mid x)}[S]$, and
$\hat D := \hat S - \mathbb{E}_{\pi_0(\bm{a} \mid x)}[\hat S]$.
\end{proposition}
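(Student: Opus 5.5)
Under (E1) or (E2), $\hat e_k(x,\bm{a}) = \hat\theta_k(x)$ does not depend on $\bm{a}$. Hence $\hat{\bar e}_k^{\pi}(x,\bm{a}(k)) = \hat e_k(x,\bm{a})$, and so $\hat w_k = w_k^{\mathrm{IIPS}}$. This is the observation made right after Eq.~\eqref{eq:what}. I use it to write both estimators as averages of i.i.d.\ terms with identical weights. IIPS has per-sample term $Z_{\mathrm{I}} := \sum_k \alpha_k w_k^{\mathrm{IIPS}} Y_k$. ED-DR has per-sample term
\[
Z_{\mathrm{ED}} := \sum_k \alpha_k w_k^{\mathrm{IIPS}}(Y_k - \hat q_k) + g(x),
\qquad g(x) := \mathbb{E}_{\pi(\bm{a}\mid x)}\Big[\sum_k \alpha_k \hat q_k(x,\bm{a})\Big].
\]
Cross-fitting lets me treat $\hat q_k$ as deterministic, so the samples are i.i.d. Each variance is therefore $\frac1n$ times the single-sample variance.

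I then decompose each single-sample variance by the law of total variance over $(x,\bm{a})$ and then $\bm{Y}$. The conditional-on-$(x,\bm{a})$ part is the same for both estimators and equals $\sigma_Y^2$, since $\hat q_k$ and $g(x)$ are fixed given $(x,\bm{a})$. This is exactly the $\sigma_Y^2$ term of Proposition~\ref{prop:variance}, so it cancels in the difference. What remains is $\mathbb{V}_{p(x)\pi_0}(S)$ for IIPS and $\mathbb{V}_{p(x)\pi_0}(S - \hat S + g(x))$ for ED-DR.

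For these I use the law of total variance over $\bm{a}\mid x$, followed by $x$. In the conditional step $g(x)$ is constant, so the conditional-on-$x$ pieces are $\mathbb{E}_x[\mathbb{V}_{\pi_0}(S)]$ and $\mathbb{E}_x[\mathbb{V}_{\pi_0}(S-\hat S)]$. Since $D$ and $D-\hat D$ have zero conditional mean given $x$, these equal $\mathbb{V}_{p(x)\pi_0}(D)$ and $\mathbb{V}_{p(x)\pi_0}(D-\hat D)$.

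The main obstacle is the context parts: I must show $\mathbb{V}_x(\mathbb{E}_{\pi_0}[S]) = \mathbb{V}_x(\mathbb{E}_{\pi_0}[S-\hat S] + g(x))$, which reduces to showing $\mathbb{E}_{\pi_0}[\hat S \mid x] = g(x)$. For this I first change the measure: because $\hat q_k(x,\bm{a}) = \hat\theta_k(x)\hat r(x,\bm{a}(k))$ depends on $\bm{a}$ only through $\bm{a}(k)$, I can write
\[
\mathbb{E}_{\pi_0}\big[w_k^{\mathrm{IIPS}}\hat q_k \mid x\big]
= \sum_a \pi_0(a\mid x,k)\,\frac{\pi(a\mid x,k)}{\pi_0(a\mid x,k)}\,\hat\theta_k(x)\hat r(x,a)
= \mathbb{E}_{\pi}\big[\hat q_k \mid x\big],
\]
using (S1) so that no positive mass under $\pi$ is dropped. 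Summing over $k$ with weights $\alpha_k$ gives the claim.

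With this, both context terms equal $\mathbb{V}_x(\mathbb{E}_{\pi_0}[S\mid x])$ and cancel. Only the difference of the middle terms survives, which is exactly Eq.~\eqref{eq:var-comp}. The same marginalization also confirms that ED-DR's context term matches $\sigma_x^2$ in Proposition~\ref{prop:variance}, so I could equally obtain the result by specializing Eq.~\eqref{eq:var-eddr} and the analogous IIPS formula.
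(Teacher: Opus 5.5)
Your proposal is correct and follows essentially the same route as the paper. You use $\hat w_k = w_k^{\mathrm{IIPS}}$ under (E1)/(E2), cancel the common $\sigma_Y^2$ term by the law of total variance, and use the transport identity (Lemma~\ref{lem:transport}, under (S1)) to get $\mathbb{E}_{\pi_0}[\hat S \mid x] = g(x)$, which cancels the context terms; the zero conditional means of $D$ and $D-\hat D$ then give the stated form, and the only difference from the paper is that it substitutes $\hat g = \mathbb{E}_{\pi_0}[\hat S]$ before splitting off the context term while you split first, which is immaterial.
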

Thus, ED-DR achieves lower variance than IIPS if and only if
$\mathbb{V}(D - \hat D) \le \mathbb{V}(D)$, which holds when $\hat D$
accurately predicts $D$. Conversely, ED-DR has higher variance if
$\hat D$ is negatively correlated with $D$.
\par

To compare the MSEs of IIPS and ED-DR under (E3), we express IIPS bias
as
\begin{equation}
  b_{\mathrm{IIPS}}
  := \mathbb{E}_{p(x)\,\pi(\bm{a} \mid x)}
  \Big[\sum_{k=1}^{K} \alpha_k\,
    \big(\bar e_k^{\pi_0}(x, \bm{a}(k)) - \bar e_k^{\pi}(x, \bm{a}(k))\big)\,
    r(x, \bm{a}(k))\Big],
  \label{eq:iips-bias}
\end{equation}
which measures examination probability differences between policies
and expands with policy divergence.

\begin{proposition}[MSE comparison between IIPS and ED-DR]
\label{prop:crossover}
Under the assumptions of Corollary~\ref{cor:e-side}, let
$\sigma_{\mathrm{EDDR}}^2/n
:= \mathbb{V}_{p(\mathcal{D})}\big(\hat V_{\mathrm{ED\text{-}DR}}\big)$ and
$\sigma_{\mathrm{IIPS}}^2/n
:= \mathbb{V}_{p(\mathcal{D})}\big(\hat V_{\mathrm{IIPS}}\big)$.
If $b_{\mathrm{IIPS}} \neq 0$ and
$\sigma_{\mathrm{EDDR}}^2 > \sigma_{\mathrm{IIPS}}^2$, then
\begin{equation}
\mathrm{MSE}_{\mathrm{ED\text{-}DR}} < \mathrm{MSE}_{\mathrm{IIPS}}
\;\iff\;
n > n^{*} :=
\frac{\sigma_{\mathrm{EDDR}}^2 - \sigma_{\mathrm{IIPS}}^2}{b_{\mathrm{IIPS}}^2}
\label{eq:crossover}
\end{equation}
holds.
\end{proposition}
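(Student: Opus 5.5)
The plan is to write both MSEs through the bias--variance decomposition $\mathrm{MSE}=\mathrm{Bias}^2+\mathbb{V}$ and then compare them directly. Under the assumptions of Corollary~\ref{cor:e-side}, in particular $\hat e_k=e_k$, that corollary gives $\mathrm{Bias}(\hat V_{\mathrm{ED\text{-}DR}})=0$. Hence
\[
\mathrm{MSE}_{\mathrm{ED\text{-}DR}}=\sigma_{\mathrm{EDDR}}^2/n .
\]
For IIPS I will show that $\mathrm{Bias}(\hat V_{\mathrm{IIPS}})=b_{\mathrm{IIPS}}$, up to sign. The sign does not matter, since only its square enters. This gives
\[
\mathrm{MSE}_{\mathrm{IIPS}}=b_{\mathrm{IIPS}}^2+\sigma_{\mathrm{IIPS}}^2/n .
\]
Here the key point is that $\hat V_{\mathrm{IIPS}}$ is an i.i.d.\ sample mean and uses no fitted model, so its bias does not depend on $n$.

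The bias computation for IIPS is the main step. First condition on $(x,\bm{a})$ and use (A3) to replace $Y_k$ by $e_k(x,\bm{a})\,r(x,\bm{a}(k))$. Since $w_k^{\mathrm{IIPS}}$ and $r$ depend on $\bm{a}$ only through $\bm{a}(k)$, I will condition on $\bm{a}(k)=a$ under $\pi_0$. This turns $\mathbb{E}_{\pi_0}[e_k\mid \bm{a}(k)=a]$ into $\bar e_k^{\pi_0}(x,a)$. Next, sum over $a$ with weights $\pi_0(a\mid x,k)\,w_k^{\mathrm{IIPS}}=\pi(a\mid x,k)$, which is valid by (S1). The result is
\[
\mathbb{E}[\hat V_{\mathrm{IIPS}}]=\mathbb{E}_{p(x)\pi}\Big[\sum_k\alpha_k\,\bar e_k^{\pi_0}(x,\bm{a}(k))\,r(x,\bm{a}(k))\Big].
\]
The same conditioning applied to $V(\pi)$ under $\pi$ gives
\[
V(\pi)=\mathbb{E}_{p(x)\pi}\Big[\sum_k\alpha_k\,\bar e_k^{\pi}(x,\bm{a}(k))\,r(x,\bm{a}(k))\Big].
\]
Subtracting the two yields Eq.~\eqref{eq:iips-bias}.

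Finally comes the algebra:
\[
\mathrm{MSE}_{\mathrm{ED\text{-}DR}}<\mathrm{MSE}_{\mathrm{IIPS}}
\iff
\frac{\sigma_{\mathrm{EDDR}}^2-\sigma_{\mathrm{IIPS}}^2}{n}<b_{\mathrm{IIPS}}^2 .
\]
Because $b_{\mathrm{IIPS}}\neq0$ and $n>0$, this is equivalent to $n>n^*$. The hypothesis $\sigma_{\mathrm{EDDR}}^2>\sigma_{\mathrm{IIPS}}^2$ makes $n^*>0$, so the threshold is meaningful.

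The step I expect to need the most care is justifying that $\sigma^2_{\mathrm{EDDR}}$ and $\sigma^2_{\mathrm{IIPS}}$ do not depend on $n$. This follows from the i.i.d.\ structure in Eq.~\eqref{eq:dgp} and the cross-fitting convention that $(\hat e_k,\hat r)$ are fixed functions. The per-sample variances in Proposition~\ref{prop:variance}, and the analogous one for IIPS, are therefore constants.
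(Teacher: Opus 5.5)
Your proposal is correct and follows the paper's proof: zero ED-DR bias from Corollary~\ref{cor:e-side}, the IIPS bias $b_{\mathrm{IIPS}}$ obtained by conditioning on $\bm{a}(k)=a$ under each policy and applying (A3) (the bias equals $b_{\mathrm{IIPS}}$ exactly, with no sign ambiguity), and then the same algebra comparing $b_{\mathrm{IIPS}}^2+\sigma_{\mathrm{IIPS}}^2/n$ with $\sigma_{\mathrm{EDDR}}^2/n$. The $n$-independence of $\sigma^2_{\mathrm{EDDR}}$ and $\sigma^2_{\mathrm{IIPS}}$ that you flag matters only for reading $n^*$ as a fixed crossover threshold; the stated equivalence itself follows from the definitions of $\sigma^2$ by pure algebra.
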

Thus, ED-DR achieves lower MSE than IIPS for $n > n^{*}$, and for all $n$
when $\sigma_{\mathrm{EDDR}}^2 \le \sigma_{\mathrm{IIPS}}^2$. Because
$n^*$ shrinks as $b_{\mathrm{IIPS}}$ increases, ED-DR excels at
smaller sample sizes under greater policy divergence, as empirically
verified in Section~\ref{sec:experiment}.
\par
\section{Experiments}\label{sec:experiment}
We evaluate the proposed estimators using synthetic data. Beyond
confirming the theoretical behavior predicted in
Section~\ref{sec:theory}, we assess the robustness of the estimators
when assumptions such as (A1) and (A2) are not satisfied.

\subsection{Experimental Setup}\label{sec:exp-setup}
Following AIPS~\cite{aips}, contexts $x \sim \mathcal{N}(\bm{0},
I_5)$, with $m = 10$ items, ranking length $K = 5$, and $n = 4000$
logged samples, by default. The logging policy $\pi_0$ follows the
Plackett--Luce model:
\begin{equation}\notag
  \pi_0(\bm{a} \mid x)
  = \prod_{k=1}^{K}
  \frac{\exp\big(f(x, \bm{a}(k)) / \tau_0\big)}
       {\sum_{a \in \mathcal{A} \setminus \bm{a}(<k)} \exp\big(f(x, a) / \tau_0\big)},
\end{equation}
where $\bm{a}(<k)$ contains items placed above position $k$, $f(x, a)$
is a linear scoring function, and temperature $\tau_0$ (default $1$)
governs the policy stochasticity. The evaluation policy $\pi$ is
$\epsilon$-greedy on relevance ($\epsilon = 0.2$).
\par

The examination probabilities for (E1)--(E3) are specified as
\begin{itemize}
  \item[](E1) $e_k(x,\bm{a}) = 1/k^{\lambda}$;
  \item[](E2) $e_k(x,\bm{a}) = 1/k^{\lambda_{u(x)}}$;
  \item[](E3) $e_k(x,\bm{a}) = 1/k^{\lambda} \exp\big({-\beta
    \sum_{a \in \bm{a}(<k)}} \mathrm{attract}(x, a)\big)$,
\end{itemize}
where $\lambda=1$ (default), $u(x) \in \{1, 2\}$ via $x$'s first component with
$(\lambda_1, \lambda_2) = (\lambda/2, 2\lambda)$, and
$\beta=1$. Attractiveness $\mathrm{attract}(x, a) \in (0, 1)$
correlates with relevance $r(x, a)$~\cite{obp}. Clicks are generated
via $Y_k = O_k R_k$ with independent $O_k \sim \mathrm{Bern}(e_k)$ and
$R_k \sim \mathrm{Bern}(r)$, guaranteeing (A2) and (A3),
respectively. Unless stated otherwise, the experiments default to (E3)
and vary one hyperparameter at a time.
\par

We evaluate ED-DR and LE-IIPS (including their oracle
variants with true $e_k$ and $r$) against IIPS~\cite{iips},
RIPS~\cite{rips}, cascade DR~\cite{cascadedr},
AIPS~\cite{aips}, DM, and
DR-IIPS, a DR~\cite{dudik2011} extension of IIPS. The regression EM uses logistic
models with $S = 100$ Monte Carlo samples and intervention harvesting
initialization~\cite{ih2019}. The results are robust to these
implementation choices, including $S$, initialization, and default
position weights $\alpha_k = 1$ (figures omitted). The performance
is evaluated using relative MSE, $\mathbb{E}_{p(\mathcal{D})}
\big[\big(\hat V - V(\pi)\big)^2 / V(\pi)^2\big]$, alongside its
squared bias and variance, averaged over independent random seeds.
\par

\begin{figure}
\centering
\includegraphics[width=\columnwidth]{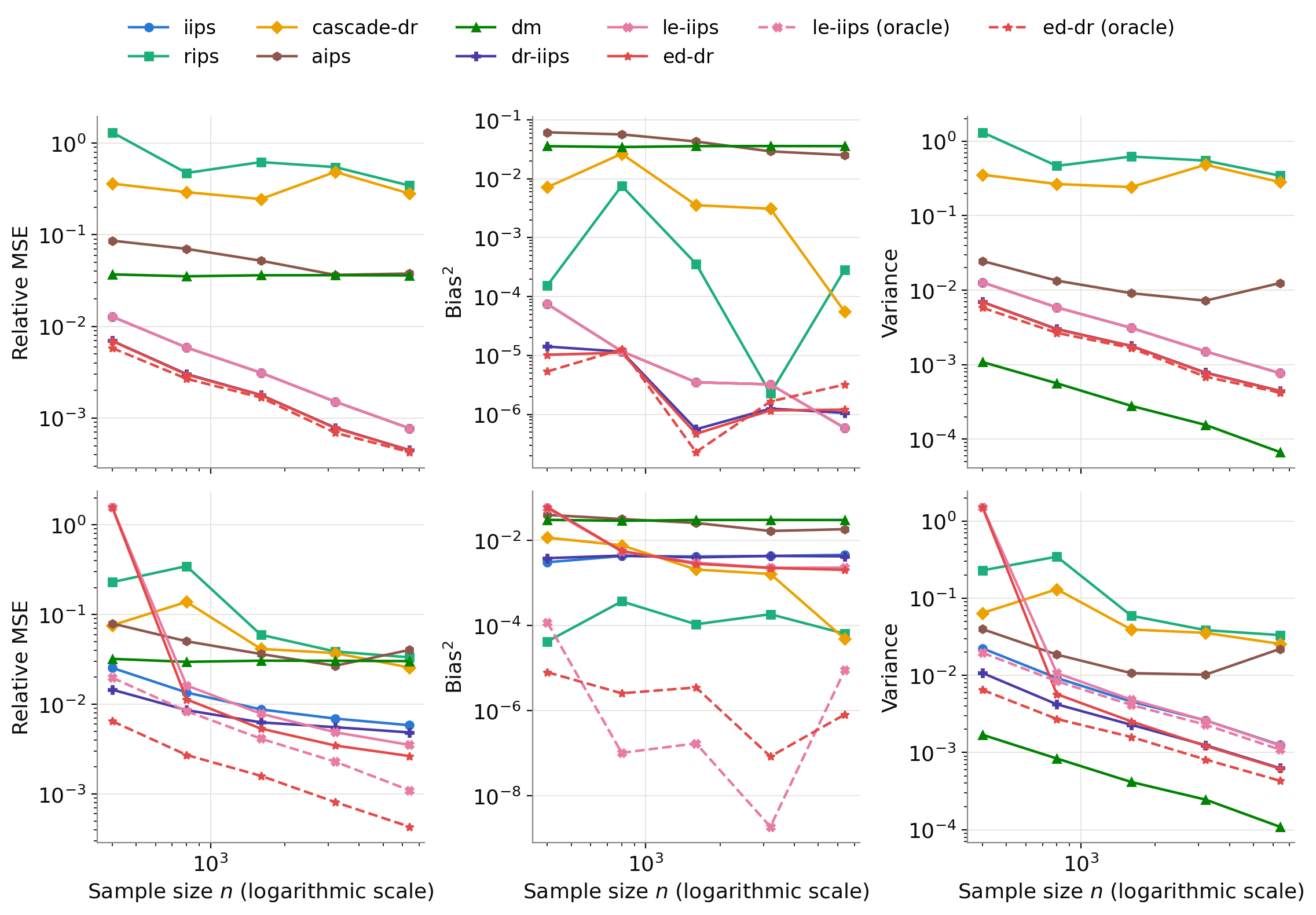}
\caption{Estimation error.
Top row: (E1) PBM; bottom row: (E3) ranking independent.}
\label{fig:data-size}
\end{figure}

\subsection{Results}\label{sec:exp-results}
Except for the oracle estimators, ED-DR consistently achieves
the lowest relative MSE. Performance degrades only when the underlying
assumptions are violated, with such degradation appearing solely as
increased bias rather than variance.
\par

\subsubsection{Effect of the sample size.}\label{sec:exp-n}
We varied $n \in \{400, 800, 1600, 3200, 6400\}$ under (E1)
(Fig.~\ref{fig:data-size}, top) and (E3) (bottom). Because (A2) and
(A3) hold, ED-DR and LE-IIPS are unbiased under both
settings, with variance decreasing as $1/n$. Under (E1), (A1) holds
and IIPS is unbiased, yielding comparable performance across
estimators; ED-DR matches IIPS despite using estimated
parameters $(\hat e_k, \hat r)$. Under (E3), (A1) fails, causing a
constant bias $b_{\mathrm{IIPS}}$ in
Eq.~\eqref{eq:iips-bias}. Consequently, ED-DR outperforms
IIPS for large $n$, whereas IIPS dominates at small $n$
owing to its lower variance, confirming the crossover $n^*$ from
Proposition~\ref{prop:crossover}.
\par

\begin{figure}
\centering
\includegraphics[width=\columnwidth]{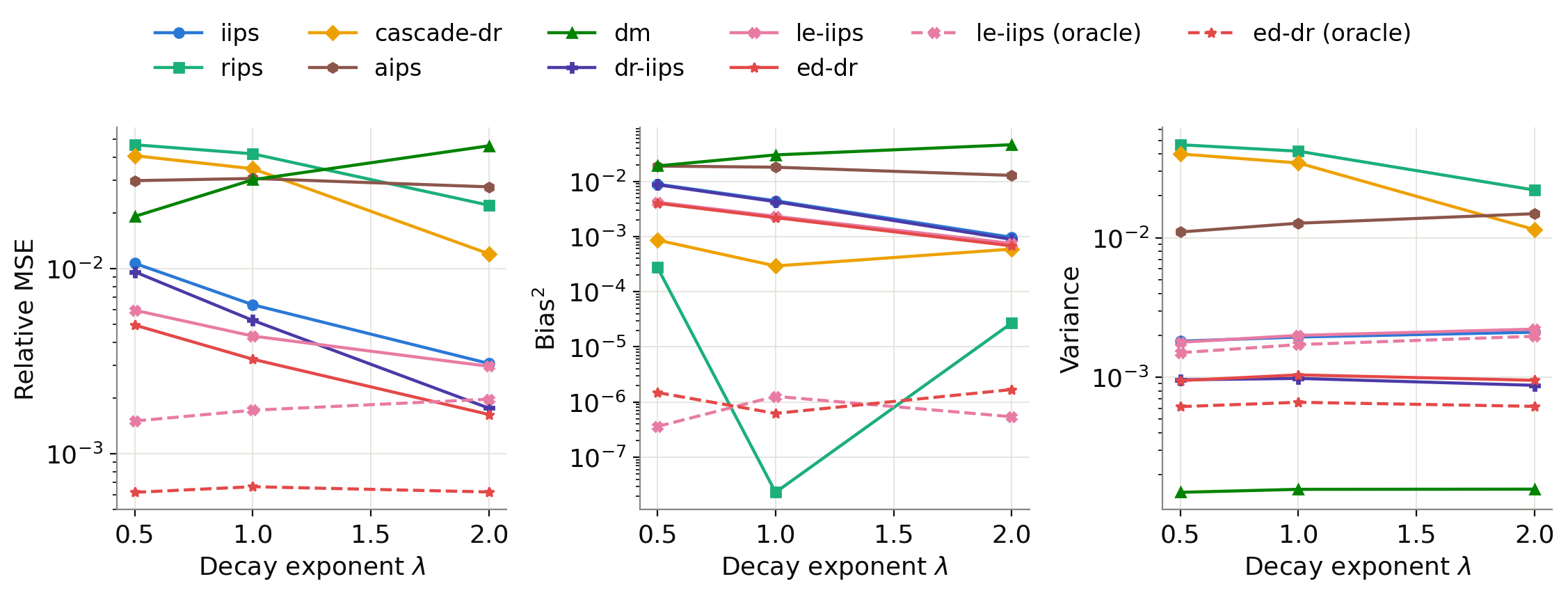}
\caption{Estimation error against the decay exponent $\lambda$.}
\label{fig:decay}
\end{figure}

\subsubsection{Effect of decay exponent $\lambda$.}\label{sec:exp-decay}
We varied $\lambda \in \{0.5, 1.0, 2.0\}$ under (E3)
(Fig.~\ref{fig:decay}). A larger $\lambda$ reduces examination at
lower positions and concentrates clicks at top ranks, decreasing
estimation error across most estimators. ED-DR consistently
achieves the lowest relative MSE, maintaining a slight edge over
DR-IIPS at $\lambda = 2$.
\par

\begin{figure}
\centering
\includegraphics[width=\columnwidth]{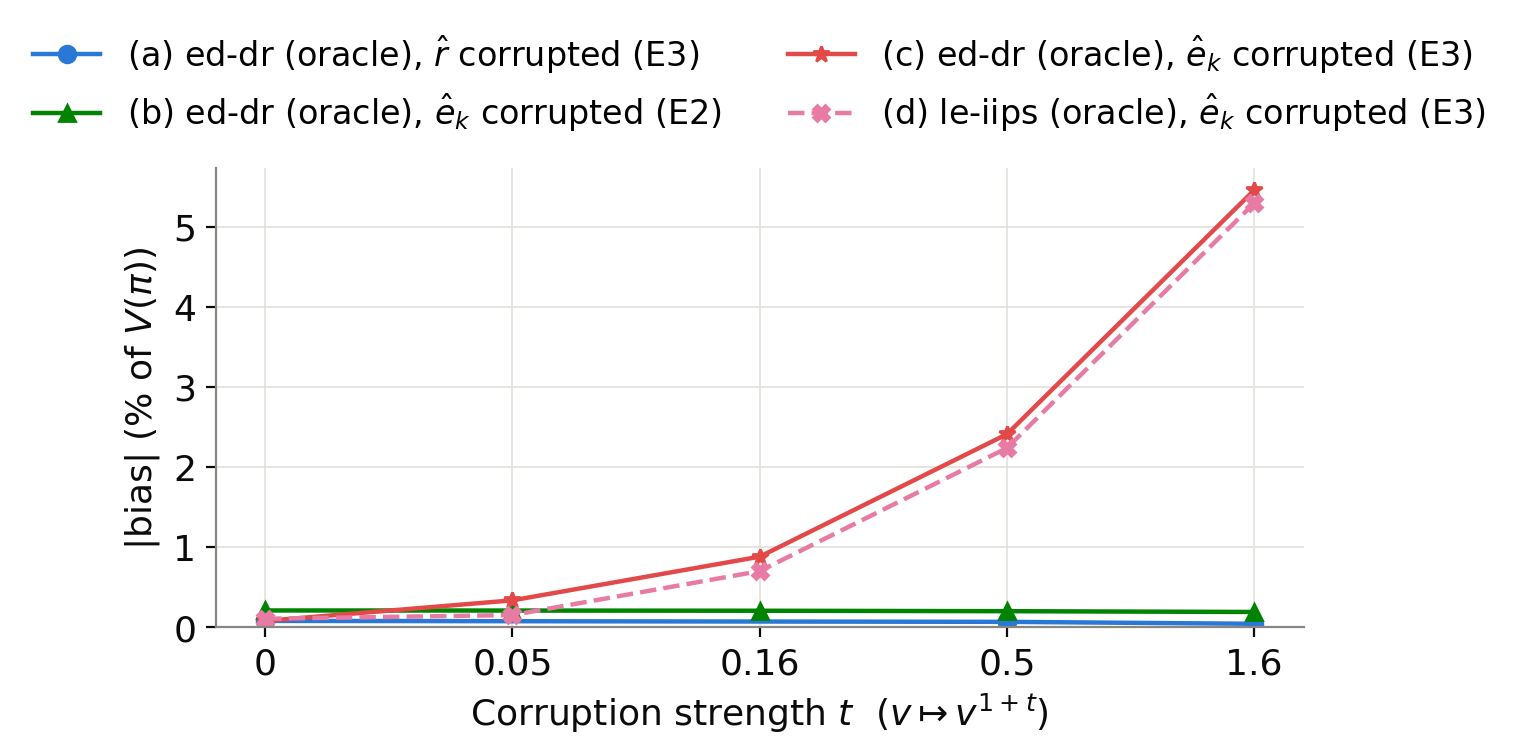}
\caption{$|\mathrm{Bias}|$ of ED-DR (oracle) against the size $t$ of the
error injected into one estimate. (a) $\hat r$ corrupted under (E3);
(b) $\hat e_k$ corrupted under (E2); (c) $\hat e_k$ corrupted under (E3),
with (d) LE-IIPS (oracle) given the same $\hat e_k$ for reference.}
\label{fig:injection}
\end{figure}

\subsubsection{Effect of errors in the examination and relevance estimates.}
\label{sec:exp-injection}
We empirically verified that ED-DR bias depends exclusively on
examination estimation errors. Starting from ED-DR (oracle), we
corrupt either $e_k$ or $r$ as $v \mapsto v^{1+t}$ ($t \ge 0$) and
plot $\vert\mathrm{Bias}\vert$ against $t$
(Fig.~\ref{fig:injection}). We consider three cases: (a) under (E3),
corrupting only $\hat r$; (b) under (E2), corrupting only $\hat e_k$
while maintaining ranking independence; and (c) under (E3), corrupting
only $\hat e_k$ alongside LE-IIPS (oracle) given the same
corrupted $\hat e_k$. Bias stays zero for all $t$ in (a) and (b),
growing with $t$ only in (c) where ED-DR matches
LE-IIPS. This confirms Corollaries~\ref{cor:e-side}
and~\ref{cor:r-side}: relevance errors never induce bias, examination
errors introduce no bias under (E2), and the bias under (E3) is
determined solely by $\hat e_k$.

\begin{table}[t]
\centering
\caption{Relative MSE of ED-DR ($\times 10^{-3}$) under misspecified
examination structures. Rows: true examination structure; columns:
examination structure assumed by EM.}
\label{tab:misspec}
\tabcolsep=5pt
\begin{tabular}{lccc}
\toprule
True $\backslash$ Assumed & (E1) & (E2) & (E3) \\
\midrule
(E1) & $0.73$ & $0.73$ & $0.84$ \\
(E2) & $0.65$ & $0.61$ & $0.72$ \\
(E3) & $5.24$ & $5.25$ & $3.24$ \\
\bottomrule
\end{tabular}
\end{table}

\subsubsection{Effect of misspecified examination probabilities.}
\label{sec:exp-misspec}
We independently varied the true and EM-assumed examination structures
over (E1)--(E3) (Table~\ref{tab:misspec}). Misspecification barely
affects the relative MSE under true (E1) and has a minor effect under
(E2), but substantially increases the error under (E3). Thus, the
misspecification loss increases with the complexity of the true
examination model. Figure~\ref{fig:bound} shows the evaluation of the
LE-IIPS bias bound from Eq.~\eqref{eq:le-bias-bound} across all
combinations in Table~\ref{tab:misspec}, plotting the theoretical
bound against the actual bias $\vert\mathrm{Bias}(\hat
V_{\mathrm{LE}})\vert$. Most points fall below $y = x$, empirically
validating that the misspecification bias remains within the
theoretical bound.
\par

\begin{figure}[t]
\centering
\includegraphics[width=0.5\columnwidth]{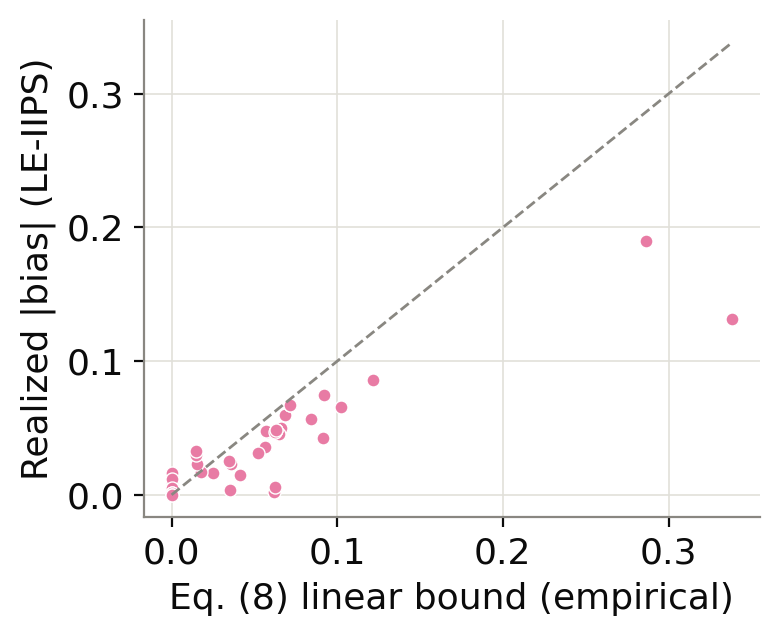}
\caption{Actual bias $|\mathrm{Bias}(\hat V_{\mathrm{LE}})|$ of LE-IIPS
against the bound of Eq.~\eqref{eq:le-bias-bound}.}
\label{fig:bound}
\end{figure}

\begin{figure}[t]
\centering
\includegraphics[width=\columnwidth]{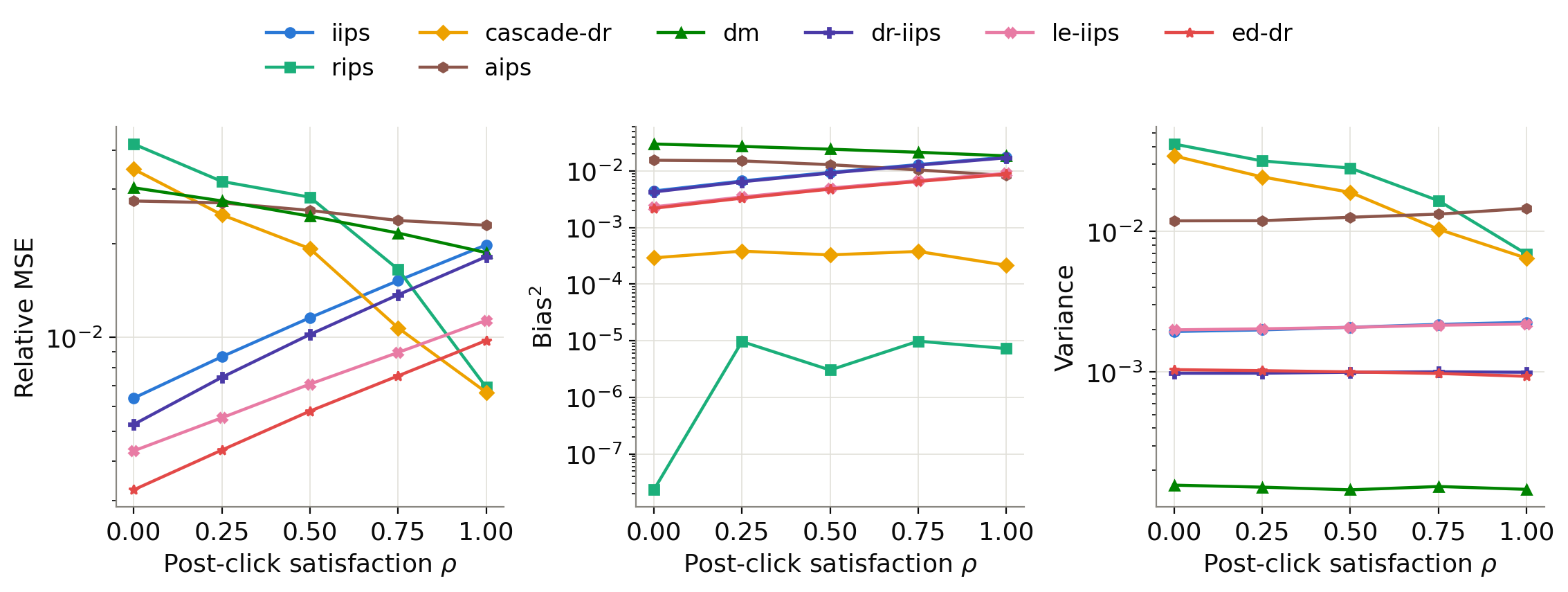}
\caption{Estimation error against the probability $\rho$ under DBN.}
\label{fig:a2}
\end{figure}

\subsubsection{Under DBN.}\label{sec:lose-a2}
We evaluated estimator performance under the DBN model
(Fig.~\ref{fig:a2}), where satisfied users leave with probability
$\rho$. A higher $\rho$ increases $O_k$'s dependence on
higher-position relevance $R_j$ ($j < k$), deviating further from
(A2). As $\rho$ grows, ED-DR's bias increases while variance stays
constant, confirming that violations of (A2) introduce bias rather
than variance. However, the bias of ED-DR increases more slowly than
that of IIPS, keeping ED-DR superior unless $\rho \to 1$. At $\rho =
1$, clicks occur only at top ranks, preventing weight accumulation at
lower positions for RIPS and cascade DR; their variance drops
significantly, outperforming ED-DR in relative MSE.
\par

\begin{figure}[t]
\centering
\includegraphics[width=\columnwidth]{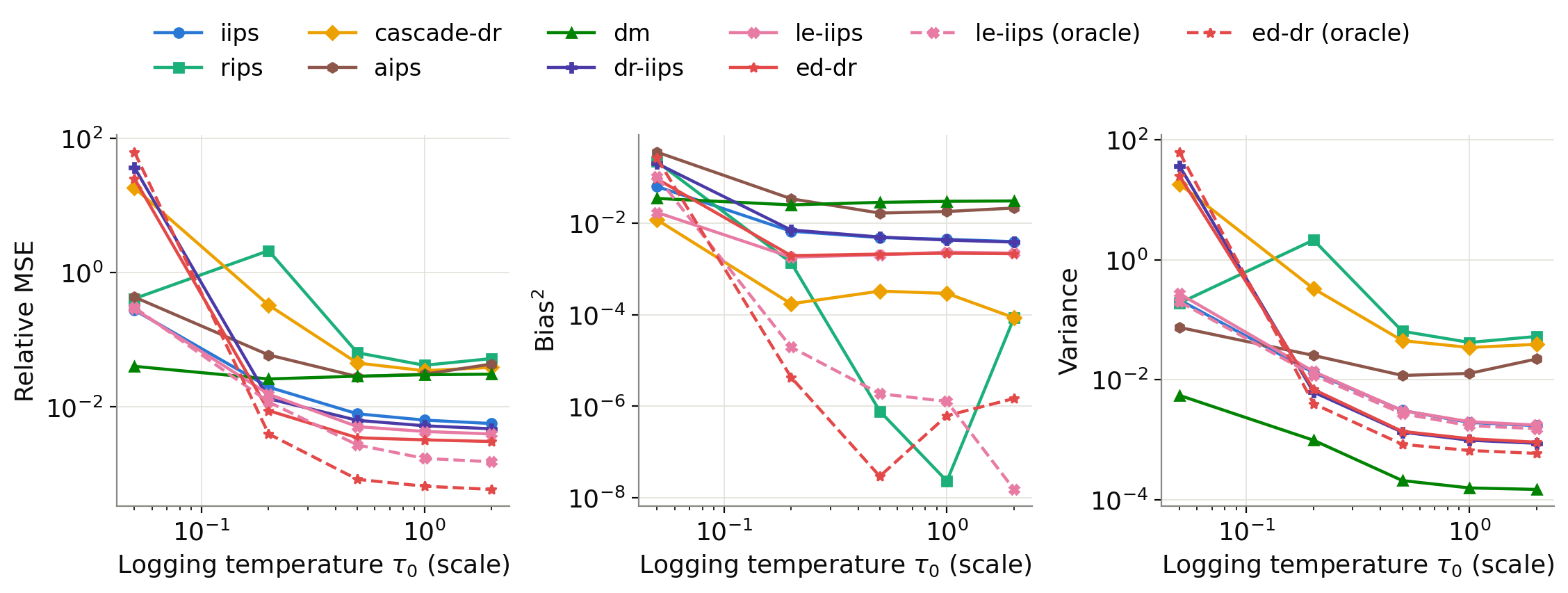}
\caption{Estimation error against the temperature $\tau_0$ of the logging
policy.}
\label{fig:tau0}
\end{figure}

\subsubsection{Effect of logging policy temperature $\tau_0$.}
\label{sec:lose-deterministic}
We varied the logging temperature $\tau_0$
(Fig.~\ref{fig:tau0}). Lower $\tau_0$ makes the logging policy
deterministic, rendering $e_k$ and $r$ unidentifiable
(Section~\ref{sec:clickmodel}) and causing importance weights to
diverge. At small $\tau_0$, all weight-based estimators suffer from
exploding variance, whereas DM remains stable but exhibits persistent
bias. For larger $\tau_0$, ED-DR consistently achieves the lowest
relative MSE. Here, the bias of ED-DR exceeds its variance; because
oracle ED-DR lacks this bias, it stems entirely from EM estimation.
\par
\section{Conclusion}
We addressed the ambiguity in ranking OPE, where logged clicks
confound unexamined items with examined non-clicks. By decomposing
clicks into examination and relevance, we proposed LE-IIPS, which
corrects IIPS bias, and its doubly robust extension, ED-DR. We
established that ED-DR is unbiased if examination probabilities are
accurately estimated or under ranking-independent examination
(Proposition~\ref{prop:identity}), and that it achieves a lower MSE
than IIPS for sample sizes exceeding $n^*$ under ranking-dependent
examination (Proposition~\ref{prop:crossover}). Experiments
demonstrated that ED-DR consistently achieves the lowest MSE for large
samples, falling behind only under small sample sizes, nearly
deterministic policies, or strong user departure dynamics in DBN.
\par

Future work includes dynamically selecting or posterior-averaging the
assumed examination structure per context (as in AIPS), and extending
the click decomposition to sequential examination to accommodate
cascade models and DBN.
\par
\appendix
\section{Appendix --- Omitted Proofs}
\label{app:proofs}

\subsection{Derivation of (A3)}
\label{app:a3-derivation}
\begin{align*}
q_k(x,\bm{a})
&= \mathbb{E}[O_k R_k \mid x, \bm{a}]
&& {\scriptstyle (\because\ \text{Eq.~\eqref{eq:click-gen}})}\\
&= \mathbb{E}[O_k \mid x, \bm{a}]\,
   \mathbb{E}[R_k \mid x, \bm{a}]
&& {\scriptstyle (\because\ \text{A2})}\\
&= e_k(x,\bm{a})\, r(x,\bm{a}(k))
&& {\scriptstyle (\because\ O_k,R_k\sim\mathrm{Bern})}
\end{align*}

\subsection{Auxiliary Lemmas}
The following two lemmas are used repeatedly in the proofs below.

\begin{lemma}
\label{lem:transport}
Under (S1), for any $x$, $k$, and any integrable function $f(x, a)$,
\begin{equation*}
\mathbb{E}_{\pi_0(\bm{a} \mid x)}
\big[w_k^{\mathrm{IIPS}}\, f(x, \bm{a}(k))\big]
= \mathbb{E}_{\pi(\bm{a} \mid x)}
\big[f(x, \bm{a}(k))\big]
\end{equation*}
holds.
\end{lemma}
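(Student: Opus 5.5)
The identity follows by conditioning on the item placed at position $k$ and then summing over items. Fix $x$ and $k$. Because $w_k^{\mathrm{IIPS}}(x,\bm{a})$ and $f(x,\bm{a}(k))$ depend on $\bm{a}$ only through $\bm{a}(k)$, I group the rankings in the left-hand expectation according to the item $a=\bm{a}(k)$. The definition of the marginal $\pi_0(a\mid x,k)=\sum_{\bm{a}'}\pi_0(\bm{a}'\mid x)\mathbb{I}\{\bm{a}'(k)=a\}$ then gives
\begin{equation*}
\mathbb{E}_{\pi_0(\bm{a}\mid x)}\big[w_k^{\mathrm{IIPS}} f(x,\bm{a}(k))\big]
= \sum_{a\in\mathcal{A}:\,\pi_0(a\mid x,k)>0} \pi_0(a\mid x,k)\,\frac{\pi(a\mid x,k)}{\pi_0(a\mid x,k)}\, f(x,a).
\end{equation*}
Only items with $\pi_0(a\mid x,k)>0$ appear here, because items outside the logging support receive zero mass. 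This is also the convention under which the weight is well defined on the support of $\pi_0$.

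Next I cancel $\pi_0(a\mid x,k)$ in each term, which leaves $\sum_{a:\,\pi_0(a\mid x,k)>0}\pi(a\mid x,k) f(x,a)$. Condition (S1), in contrapositive form, states that $\pi_0(a\mid x,k)=0$ implies $\pi(a\mid x,k)=0$. The items dropped from the sum therefore contribute nothing, and the sum extends to all of $\mathcal{A}$ unchanged.

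Finally, I reverse the grouping for $\pi$. I write $\sum_a \pi(a\mid x,k) f(x,a)=\sum_{\bm{a}}\pi(\bm{a}\mid x) f(x,\bm{a}(k))$ using the definition of the marginal $\pi(a\mid x,k)$, and this equals $\mathbb{E}_{\pi(\bm{a}\mid x)}[f(x,\bm{a}(k))]$.

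The only delicate point is the support bookkeeping in the middle step. Integrability of $f$ ensures that the sums, which are finite for finite $\mathcal{A}$, can be regrouped. Apart from that, the argument is a direct change of measure from the ranking distribution to the position-$k$ marginal.
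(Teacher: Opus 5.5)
Your proof is correct and follows the same route as the paper's. Both expand over rankings, group by the item at position $k$ to collapse to the marginal $\pi_0(a\mid x,k)$, cancel against the weight, and then re-expand $\pi(a\mid x,k)$ as a sum over rankings. Your explicit restriction to items with $\pi_0(a\mid x,k)>0$, followed by (S1) to extend the sum back to all of $\mathcal{A}$, is slightly more careful than the paper. The paper cancels the ratio for every $a$ without comment and so never visibly uses (S1).
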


\begin{proof}
\begin{align*}
&\mathbb{E}_{\pi_0(\bm{a} \mid x)}
\big[w_k^{\mathrm{IIPS}}\, f(x, \bm{a}(k))\big] \\
&= \sum_{\bm{a}} \pi_0(\bm{a} \mid x)\,
\tfrac{\pi(\bm{a}(k) \mid x, k)}{\pi_0(\bm{a}(k) \mid x, k)}\,
f(x, \bm{a}(k)) \\
&= \sum_{\bm{a}} \pi_0(\bm{a} \mid x)
\sum_{a \in \mathcal{A}}
\tfrac{\pi(a \mid x, k)}{\pi_0(a \mid x, k)}\,
\mathbb{I}\{\bm{a}(k) = a\}\, f(x, a) \\
&= \sum_{a \in \mathcal{A}}
\sum_{\bm{a}} \pi_0(\bm{a} \mid x)\,
\mathbb{I}\{\bm{a}(k) = a\}\,
\tfrac{\pi(a \mid x, k)}{\pi_0(a \mid x, k)}\, f(x, a) \\
&= \sum_{a \in \mathcal{A}} \pi_0(a \mid x, k)\,
\tfrac{\pi(a \mid x, k)}{\pi_0(a \mid x, k)}\, f(x, a) \\
&= \sum_{a \in \mathcal{A}} \sum_{\bm{a}}
\pi(\bm{a} \mid x)\, \mathbb{I}\{\bm{a}(k) = a\}\, f(x, a) \\
&= \sum_{\bm{a}} \pi(\bm{a} \mid x)\, f(x, \bm{a}(k))
= \mathbb{E}_{\pi(\bm{a} \mid x)}\big[f(x, \bm{a}(k))\big]
\end{align*}
\end{proof}

\begin{lemma}
\label{lem:tower}
For any $x$, $k$, and any integrable function $f(x, a)$,
\begin{equation*}
\mathbb{E}_{\pi(\bm{a} \mid x)}
\big[\bar e_k^{\pi}(x, \bm{a}(k))\, f(x, \bm{a}(k))\big]
= \mathbb{E}_{\pi(\bm{a} \mid x)}
\big[e_k(x, \bm{a})\, f(x, \bm{a}(k))\big]
\end{equation*}
holds.
\end{lemma}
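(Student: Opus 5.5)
We prove Lemma~\ref{lem:tower} by conditioning on the item placed at position $k$. The key observation is that $f(x,\bm{a}(k))$ depends on $\bm{a}$ only through $\bm{a}(k)$, and $\bar e_k^{\pi}(x,a)$ is by definition the conditional expectation of $e_k(x,\bm{a})$ given $\bm{a}(k)=a$ under $\pi(\cdot\mid x)$. The identity is then the law of iterated expectations. To keep the argument parallel to Lemma~\ref{lem:transport}, we write it out with explicit sums.

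First, we expand the right-hand side by inserting $\sum_{a\in\mathcal{A}}\mathbb{I}\{\bm{a}(k)=a\}=1$:
\begin{equation*}
\mathbb{E}_{\pi(\bm{a}\mid x)}\big[e_k(x,\bm{a})\,f(x,\bm{a}(k))\big]
=\sum_{a\in\mathcal{A}} f(x,a)\sum_{\bm{a}}\pi(\bm{a}\mid x)\,\mathbb{I}\{\bm{a}(k)=a\}\,e_k(x,\bm{a}).
\end{equation*}
For every $a$ with $\pi(a\mid x,k)>0$, the inner sum equals $\pi(a\mid x,k)\,\bar e_k^{\pi}(x,a)$. This is exactly the definition of the conditional expectation, since
\begin{equation*}
\bar e_k^{\pi}(x,a)=\frac{\sum_{\bm{a}}\pi(\bm{a}\mid x)\,\mathbb{I}\{\bm{a}(k)=a\}\,e_k(x,\bm{a})}{\pi(a\mid x,k)}.
\end{equation*}
For items with $\pi(a\mid x,k)=0$, the inner sum vanishes. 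These items therefore contribute nothing on either side, whatever value $\bar e_k^{\pi}(x,a)$ is assigned there.

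Second, we expand the left-hand side in the same way:
\begin{equation*}
\mathbb{E}_{\pi(\bm{a}\mid x)}\big[\bar e_k^{\pi}(x,\bm{a}(k))\,f(x,\bm{a}(k))\big]
=\sum_{a\in\mathcal{A}}\pi(a\mid x,k)\,\bar e_k^{\pi}(x,a)\,f(x,a).
\end{equation*}
This uses the marginal identity $\sum_{\bm{a}}\pi(\bm{a}\mid x)\,\mathbb{I}\{\bm{a}(k)=a\}=\pi(a\mid x,k)$, already used in Lemma~\ref{lem:transport}. Comparing this expression with the first step gives equality term by term, which proves the lemma.

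The only delicate point is the zero-probability items, where the conditional expectation is undefined. Handling them as described in the first step settles it, so no common-support assumption is needed. Integrability of $f$, together with $e_k\in[0,1]$, justifies exchanging the sums; with finite $\mathcal{A}$ and $\Pi_K(\mathcal{A})$, this exchange is trivial.
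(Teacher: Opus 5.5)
Your proof is correct and follows the paper's argument: both sides are expanded via indicators on $\bm{a}(k)$ and reduced to $\sum_{a\in\mathcal{A}}\pi(a\mid x,k)\,\bar e_k^{\pi}(x,a)\,f(x,a)$, using the marginal identity and the definition of $\bar e_k^{\pi}$. Your explicit handling of items with $\pi(a\mid x,k)=0$, where $\bar e_k^{\pi}$ is undefined, is a small extra point the paper leaves implicit.
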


\begin{proof}
\begin{align*}
\text{(LHS)}
&= \sum_{\bm{a}} \pi(\bm{a} \mid x)\,
\bar e_k^{\pi}(x, \bm{a}(k))\, f(x, \bm{a}(k)) \\
&= \sum_{a \in \mathcal{A}}
\sum_{\bm{a}} \pi(\bm{a} \mid x)\,
\mathbb{I}\{\bm{a}(k) = a\}\,
\bar e_k^{\pi}(x, a)\, f(x, a) \\
&= \sum_{a \in \mathcal{A}} \pi(a \mid x, k)\,
\bar e_k^{\pi}(x, a)\, f(x, a) \\
\text{(RHS)}
&= \sum_{\bm{a}} \pi(\bm{a} \mid x)\,
e_k(x, \bm{a})\, f(x, \bm{a}(k)) \\
&= \sum_{a \in \mathcal{A}}
\sum_{\bm{a}} \pi(\bm{a} \mid x)\,
e_k(x, \bm{a})\,
\mathbb{I}\{\bm{a}(k) = a\}\, f(x, a) \\
&= \sum_{a \in \mathcal{A}} \pi(a \mid x, k)\,
\bar e_k^{\pi}(x, a)\, f(x, a)
\;\; {\scriptstyle(\because\ \text{definition of } \bar e_k^{\pi})}
\end{align*}
Hence (LHS) $=$ (RHS).
\end{proof}

\subsection{Proof of Proposition~\ref{prop:le} and Eq.~\eqref{eq:le-bias-bound}}

\begin{proof}
\begin{align*}
&\mathbb{E}_{p(\mathcal{D})}\big[\hat V_{\mathrm{LE}}(\pi; \mathcal{D})\big] \\
&= \mathbb{E}_{p(\mathcal{D})}
\Big[\sum_{k} \alpha_k\,
\tfrac{\pi(\bm{a}(k) \mid x, k)}{\pi_0(\bm{a}(k) \mid x, k)}\,
\tfrac{\bar e_k^{\pi}(x,\, \bm{a}(k))}{e_k(x,\, \bm{a})}\, Y_k\Big]
\;\; {\scriptstyle(\because\ \text{i.i.d.})} \\
&= \mathbb{E}_{p(x)\,\pi_0(\bm{a} \mid x)}
\Big[\sum_{k} \alpha_k\, w_k^{\mathrm{IIPS}}\,
\tfrac{\bar e_k^{\pi}}{e_k}\, q_k(x, \bm{a})\Big] \\
&= \mathbb{E}_{p(x)\,\pi_0(\bm{a} \mid x)}
\Big[\sum_{k} \alpha_k\, w_k^{\mathrm{IIPS}}\,
\bar e_k^{\pi}\, r(x, \bm{a}(k))\Big]
\;\; {\scriptstyle(\because\ \text{(A3)})} \\
&= \mathbb{E}_{p(x)\,\pi(\bm{a} \mid x)}
\Big[\sum_{k} \alpha_k\,
\bar e_k^{\pi}\, r\Big]
\;\; {\scriptstyle(\because\ \text{Lemma~\ref{lem:transport}})} \\
&= \mathbb{E}_{p(x)\,\pi(\bm{a} \mid x)}
\Big[\sum_{k} \alpha_k\, e_k\, r\Big]
\;\; {\scriptstyle(\because\ \text{Lemma~\ref{lem:tower}})} \\
&= \mathbb{E}_{p(x)\,\pi(\bm{a} \mid x)}
\Big[\sum_{k} \alpha_k\, q_k\Big]
= V(\pi)
\;\; {\scriptstyle(\because\ \text{(A3)})}
\end{align*}
\end{proof}

\medskip\noindent\textit{Proof of Eq.~\eqref{eq:le-bias-bound}.}
\begin{align*}
\mathrm{Bias}\big(\hat V_{\mathrm{LE}}\big)
= \mathbb{E}_{p(\mathcal{D})}\big[\hat V_{\mathrm{LE}}(\pi; \mathcal{D})\big] - V(\pi)
\end{align*}
\begin{align*}
\text{(first term)}
&= \mathbb{E}_{p(x)\,\pi_0(\bm{a} \mid x)}
\Big[\sum_{k} \alpha_k\, w_k^{\mathrm{IIPS}}\,
\tfrac{\hat{\bar e}_k^{\pi}}{\hat e_k}\, e_k\, r\Big] \\
\text{(second term)}
&= \mathbb{E}_{p(x)\,\pi(\bm{a} \mid x)}
\Big[\sum_{k} \alpha_k\, e_k\, r\Big]
\;\; {\scriptstyle(\because\ \text{(A3)})} \\
&= \mathbb{E}_{p(x)\,\pi(\bm{a} \mid x)}
\Big[\sum_{k} \alpha_k\, \bar e_k^{\pi}\, r\Big]
\;\; {\scriptstyle(\because\ \text{Lemma~\ref{lem:tower}})} \\
&= \mathbb{E}_{p(x)\,\pi_0(\bm{a} \mid x)}
\Big[\sum_{k} \alpha_k\, w_k^{\mathrm{IIPS}}\,
\bar e_k^{\pi}\, r\Big]
\;\; {\scriptstyle(\because\ \text{Lemma~\ref{lem:transport}})} \\
&= \mathbb{E}_{p(x)\,\pi_0(\bm{a} \mid x)}
\Big[\sum_{k} \alpha_k\, w_k^{\mathrm{IIPS}}\,
\tfrac{\bar e_k^{\pi}}{e_k}\, e_k\, r\Big]
\end{align*}
\begin{align*}
\therefore\ \big|\mathrm{Bias}\big(\hat V_{\mathrm{LE}}\big)\big|
&= \Big|\mathbb{E}
\Big[\sum_{k} \alpha_k\, w_k^{\mathrm{IIPS}}
\Big(\tfrac{\hat{\bar e}_k^{\pi}}{\hat e_k}
- \tfrac{\bar e_k^{\pi}}{e_k}\Big)\, e_k\, r\Big]\Big| \\
&\le \sum_{k} \alpha_k\,
\mathbb{E}\Big[w_k^{\mathrm{IIPS}}
\Big|\tfrac{\hat{\bar e}_k^{\pi}}{\hat e_k}
- \tfrac{\bar e_k^{\pi}}{e_k}\Big|\, e_k\, r\Big]
\end{align*}
\qed

\subsection{Proof of Proposition~\ref{prop:identity} and Corollaries~\ref{cor:e-side} and~\ref{cor:r-side}}

\begin{proof}
\begin{align*}
&\mathbb{E}_{p(\mathcal{D})}\big[\hat V_{\mathrm{ED\text{-}DR}}(\pi; \mathcal{D})\big] \\
&= \mathbb{E}_{p(\mathcal{D})}
\Big[\mathbb{E}_{\bm{a}' \sim \pi(\bm{a}' \mid x)}
\Big[\sum_{k} \alpha_k\, \hat q_k(x, \bm{a}')\Big] \\
&\quad
+ \sum_{k} \alpha_k\, \hat w_k(x, \bm{a})\,
\big(Y_k - \hat q_k(x, \bm{a})\big)\Big]
\;\; {\scriptstyle(\because\ \text{i.i.d.})} \\
&= \mathbb{E}_{p(x)\,\pi(\bm{a} \mid x)}
\Big[\sum_{k} \alpha_k\,
\big(q_k - (q_k - \hat q_k)\big)\Big] \\
&\quad + \mathbb{E}_{p(x)\,\pi_0(\bm{a} \mid x)}
\Big[\sum_{k} \alpha_k\, \hat w_k\,
\big(q_k - \hat q_k\big)\Big] \\
&= V(\pi)
- \mathbb{E}_{p(x)\,\pi(\bm{a} \mid x)}
\Big[\sum_{k} \alpha_k\, \Delta_k\Big] \\
&\quad + \mathbb{E}_{p(x)\,\pi_0(\bm{a} \mid x)}
\Big[\sum_{k} \alpha_k\, \hat w_k\, \Delta_k\Big] \\
&= V(\pi)
- \mathbb{E}_{p(x)\,\pi_0(\bm{a} \mid x)}
\Big[\sum_{k} \alpha_k\, w^{*}\, \Delta_k\Big] \\
&\quad + \mathbb{E}_{p(x)\,\pi_0(\bm{a} \mid x)}
\Big[\sum_{k} \alpha_k\, \hat w_k\, \Delta_k\Big]
\;\; {\scriptstyle(\because\ \text{Lemma~\ref{lem:transport}})} \\[8pt]
&\therefore\ \begin{aligned}[t]
\mathrm{Bias}\big(\hat V_{\mathrm{ED\text{-}DR}}\big)
&= \mathbb{E}_{p(\mathcal{D})}
\big[\hat V_{\mathrm{ED\text{-}DR}}\big] - V(\pi) \\
&= \mathbb{E}
\Big[\sum_{k} \alpha_k\,
\big(\hat w_k - w^{*}\big)\, \Delta_k\Big]
\end{aligned}
\end{align*}
\end{proof}

\medskip\noindent\textit{Proof of Corollary~\ref{cor:e-side}.}
\begin{align*}
&\mathrm{Bias}\big(\hat V_{\mathrm{ED\text{-}DR}}\big) \\
&= \mathbb{E}_{p(x)\,\pi_0(\bm{a} \mid x)}
\Big[\sum_{k} \alpha_k\,
\big(\hat w_k - w^{*}\big)\, \Delta_k\Big] \\
&= \mathbb{E}_{p(x)\,\pi_0(\bm{a} \mid x)}
\Big[\sum_{k} \alpha_k\, w_k^{\mathrm{IIPS}}\,
\tfrac{\bar e_k^{\pi}}{e_k}\, e_k\, (r - \hat r)\Big] \\
&\quad - \mathbb{E}_{p(x)\,\pi(\bm{a} \mid x)}
\Big[\sum_{k} \alpha_k\, e_k\, (r - \hat r)\Big]
\;\; {\scriptstyle(\because\ \hat e_k = e_k,\ \text{Lemma~\ref{lem:transport}})} \\
&= \mathbb{E}_{p(x)\,\pi(\bm{a} \mid x)}
\Big[\sum_{k} \alpha_k\, \bar e_k^{\pi}\, (r - \hat r)\Big] \\
&\quad - \mathbb{E}_{p(x)\,\pi(\bm{a} \mid x)}
\Big[\sum_{k} \alpha_k\, e_k\, (r - \hat r)\Big]
\;\; {\scriptstyle(\because\ \text{Lemma~\ref{lem:transport}})} \\
&= \mathbb{E}_{p(x)\,\pi(\bm{a} \mid x)}
\Big[\sum_{k} \alpha_k\, e_k\, (r - \hat r)\Big] \\
&\quad - \mathbb{E}_{p(x)\,\pi(\bm{a} \mid x)}
\Big[\sum_{k} \alpha_k\, e_k\, (r - \hat r)\Big]
= 0
\;\; {\scriptstyle(\because\ \text{Lemma~\ref{lem:tower}})}
\end{align*}
\qed

\medskip\noindent\textit{Proof of Corollary~\ref{cor:r-side}.}
\begin{align*}
&\mathrm{Bias}\big(\hat V_{\mathrm{ED\text{-}DR}}\big) \\
&= \mathbb{E}_{p(x)\,\pi_0(\bm{a} \mid x)}
\Big[\sum_{k} \alpha_k\,
\big(\hat w_k - w^{*}\big)\, \Delta_k\Big] \\
&= \mathbb{E}_{p(x)\,\pi_0(\bm{a} \mid x)}
\Big[\sum_{k} \alpha_k\,
\big(w_k^{\mathrm{IIPS}} - w^{*}\big)\, \Delta_k\Big]
\;\; {\scriptstyle(\because\ \hat e_k = \hat\theta_k(x))} \\
&= \mathbb{E}_{p(x)\,\pi(\bm{a} \mid x)}
\Big[\sum_{k} \alpha_k\, \Delta_k\Big] \\
&\quad - \mathbb{E}_{p(x)\,\pi(\bm{a} \mid x)}
\Big[\sum_{k} \alpha_k\, \Delta_k\Big]
= 0
\;\; {\scriptstyle(\because\ \text{Lemma~\ref{lem:transport}})}
\end{align*}
\qed

\subsection{Proofs of Propositions~\ref{prop:variance} and~\ref{prop:var}}

\medskip\noindent\textit{Proof of Proposition~\ref{prop:variance}.}
Let $\hat g(x) := \mathbb{E}_{\pi(\bm{a}' \mid x)}
\big[\sum_{k} \alpha_k\, \hat q_k(x, \bm{a}')\big]$ denote the DM term in
Eq.~\eqref{eq:eddr}.
\begin{align*}
&\mathbb{V}_{p(\mathcal{D})}\big(\hat V_{\mathrm{ED\text{-}DR}}\big) \\
&= \frac{1}{n}\,\mathbb{V}_{p(\mathcal{D})}\Big(\hat g
+ \sum_{k} \alpha_k\, \hat w_k\, \big(Y_k - \hat q_k\big)\Big)
\;\; {\scriptstyle(\because\ \text{i.i.d.})} \\
&= \frac{1}{n}\Big\{
\mathbb{E}_{p(x)\,\pi_0(\bm{a} \mid x)}
\Big[\mathbb{V}_{p(\bm{Y} \mid x, \bm{a})}\Big(
\sum_{k} \alpha_k\, \hat w_k\, Y_k\Big)\Big] \\
&\qquad + \mathbb{V}_{p(x)\,\pi_0(\bm{a} \mid x)}
\Big(\hat g + \sum_{k} \alpha_k\, \hat w_k\, \Delta_k\Big)\Big\} \\
&= \frac{1}{n}\Big\{
\mathbb{E}_{p(x)\,\pi_0(\bm{a} \mid x)}
\Big[\mathbb{V}_{p(\bm{Y} \mid x, \bm{a})}\Big(
\sum_{k} \alpha_k\, \hat w_k\, Y_k\Big)\Big] \\
&\qquad + \mathbb{E}_{p(x)}
\Big[\mathbb{V}_{\pi_0(\bm{a} \mid x)}
\Big(\sum_{k} \alpha_k\, \hat w_k\, \Delta_k\Big)\Big] \\
&\qquad + \mathbb{V}_{p(x)}
\Big(\hat g + \mathbb{E}_{\pi_0(\bm{a} \mid x)}
\Big[\sum_{k} \alpha_k\, \hat w_k\, \Delta_k\Big]\Big)\Big\} \\
&= \frac{1}{n}\Big\{
\mathbb{E}_{p(x)\,\pi_0(\bm{a} \mid x)}
\Big[\mathbb{V}_{p(\bm{Y} \mid x, \bm{a})}\Big(
\sum_{k} \alpha_k\, \hat w_k\, Y_k\Big)\Big] \\
&\qquad + \mathbb{E}_{p(x)}
\Big[\mathbb{V}_{\pi_0(\bm{a} \mid x)}
\Big(\sum_{k} \alpha_k\, \hat w_k\, \Delta_k\Big)\Big] \\
&\qquad + \mathbb{V}_{p(x)}
\Big(\mathbb{E}_{\pi_0(\bm{a} \mid x)}
\Big[\sum_{k} \alpha_k\, \hat w_k\, q_k\Big]\Big)\Big\} \\
&\qquad\quad
\rlap{$\scriptstyle(\because\
\text{(S1), Lemma~\ref{lem:transport}, Lemma~\ref{lem:tower}})$}
\end{align*}
Eq.~\eqref{eq:var-le} follows analogously.
\qed

\medskip\noindent\textit{Proof of Proposition~\ref{prop:var}.}
\begin{align*}
&\mathbb{V}_{p(\mathcal{D})}\big(\hat V_{\mathrm{ED\text{-}DR}}\big)
- \mathbb{V}_{p(\mathcal{D})}\big(\hat V_{\mathrm{IIPS}}\big) \\
&= \frac{1}{n}\,\mathbb{V}_{p(\mathcal{D})}\Big(\hat g
+ \sum_{k} \alpha_k\, w_k^{\mathrm{IIPS}}\, \big(Y_k - \hat q_k\big)\Big) \\
&\quad - \frac{1}{n}\,\mathbb{V}_{p(\mathcal{D})}\Big(
\sum_{k} \alpha_k\, w_k^{\mathrm{IIPS}}\, Y_k\Big)
\;\; {\scriptstyle(\because\ \text{i.i.d., (E1), (E2)})} \\
&= \frac{1}{n}\Big\{\mathbb{E}_{p(x)\,\pi_0(\bm{a} \mid x)}
\Big[\mathbb{V}_{p(\bm{Y} \mid x, \bm{a})}\Big(
\sum_{k} \alpha_k\, w_k^{\mathrm{IIPS}}\, Y_k\Big)\Big] \\
&\qquad + \mathbb{V}_{p(x)\,\pi_0(\bm{a} \mid x)}
\big(\hat g + S - \hat S\big)\Big\} \\
&\quad - \frac{1}{n}\Big\{\mathbb{E}_{p(x)\,\pi_0(\bm{a} \mid x)}
\Big[\mathbb{V}_{p(\bm{Y} \mid x, \bm{a})}\Big(
\sum_{k} \alpha_k\, w_k^{\mathrm{IIPS}}\, Y_k\Big)\Big] \\
&\qquad + \mathbb{V}_{p(x)\,\pi_0(\bm{a} \mid x)}(S)\Big\} \\
&= \frac{1}{n}\,\mathbb{V}_{p(x)\,\pi_0(\bm{a} \mid x)}
\big(\hat g + S - \hat S\big)
- \frac{1}{n}\,\mathbb{V}_{p(x)\,\pi_0(\bm{a} \mid x)}(S) \\
&= \frac{1}{n}\,\mathbb{V}_{p(x)\,\pi_0(\bm{a} \mid x)}
\big(S - \big(\hat S - \mathbb{E}_{\pi_0(\bm{a} \mid x)}[\hat S]\big)\big) \\
&\quad - \frac{1}{n}\,\mathbb{V}_{p(x)\,\pi_0(\bm{a} \mid x)}(S)
\;\; {\scriptstyle(\because\ \text{(E1), (E2), Lemma~\ref{lem:transport}})} \\
&= \frac{1}{n}\,\mathbb{V}_{p(x)\,\pi_0(\bm{a} \mid x)}(S - \hat D)
- \frac{1}{n}\,\mathbb{V}_{p(x)\,\pi_0(\bm{a} \mid x)}(S) \\
&= \frac{1}{n}\Big\{\mathbb{E}_{p(x)}
\big[\mathbb{V}_{\pi_0(\bm{a} \mid x)}(S - \hat D)\big] \\
&\qquad + \mathbb{V}_{p(x)}
\big(\mathbb{E}_{\pi_0(\bm{a} \mid x)}[S - \hat D]\big)\Big\} \\
&\quad - \frac{1}{n}\Big\{\mathbb{E}_{p(x)}
\big[\mathbb{V}_{\pi_0(\bm{a} \mid x)}(S)\big]
+ \mathbb{V}_{p(x)}
\big(\mathbb{E}_{\pi_0(\bm{a} \mid x)}[S]\big)\Big\} \\
&= \frac{1}{n}\,\mathbb{E}_{p(x)}
\big[\mathbb{V}_{\pi_0(\bm{a} \mid x)}(S - \hat D)\big] \\
&\quad - \frac{1}{n}\,\mathbb{E}_{p(x)}
\big[\mathbb{V}_{\pi_0(\bm{a} \mid x)}(S)\big]
\;\; {\scriptstyle(\because\ \mathbb{E}_{\pi_0(\bm{a} \mid x)}[\hat D] = 0)} \\
&= \frac{1}{n}\,\mathbb{E}_{p(x)}
\big[\mathbb{V}_{\pi_0(\bm{a} \mid x)}(D - \hat D)\big]
- \frac{1}{n}\,\mathbb{E}_{p(x)}
\big[\mathbb{V}_{\pi_0(\bm{a} \mid x)}(D)\big] \\
&= \frac{1}{n}\,\mathbb{V}_{p(x)\,\pi_0(\bm{a} \mid x)}(D - \hat D)
- \frac{1}{n}\,\mathbb{V}_{p(x)\,\pi_0(\bm{a} \mid x)}(D)
\end{align*}
\qed

\subsection{Derivation of Eq.~\eqref{eq:iips-bias} and Proof of Proposition~\ref{prop:crossover}}

The bias of IIPS in Eq.~\eqref{eq:iips-bias} is derived as follows.
\begin{align*}
&b_{\mathrm{IIPS}}
:= \mathbb{E}_{p(\mathcal{D})}\big[\hat V_{\mathrm{IIPS}}\big] - V(\pi) \\
&= \mathbb{E}_{p(x)\,\pi_0(\bm{a} \mid x)}
\Big[\sum_{k=1}^{K} \alpha_k\, w_k^{\mathrm{IIPS}}\, q_k\Big] \\
&\quad
- \mathbb{E}_{p(x)\,\pi(\bm{a} \mid x)}
\Big[\sum_{k=1}^{K} \alpha_k\, q_k\Big]
\;\;\rlap{$\scriptstyle(\because\ \text{i.i.d.})$} \\
&= \mathbb{E}_{p(x)}\Big[\sum_{k=1}^{K} \alpha_k
\sum_{a \in \mathcal{A}} \pi(a \mid x, k) \\
&\quad
\big(\mathbb{E}_{\pi_0(\bm{a} \mid x)}[q_k \mid \bm{a}(k) = a]
- \mathbb{E}_{\pi(\bm{a} \mid x)}[q_k \mid \bm{a}(k) = a]\big)\Big] \\
&= \mathbb{E}_{p(x)}\Big[\sum_{k=1}^{K} \alpha_k
\sum_{a \in \mathcal{A}} \pi(a \mid x, k) \\
&\quad
\big(\bar e_k^{\pi_0}(x, a) - \bar e_k^{\pi}(x, a)\big)\, r(x, a)\Big]
\;\;\rlap{$\scriptstyle(\because\ \text{(A3)})$} \\
&= \mathbb{E}_{p(x)\,\pi(\bm{a} \mid x)}
\Big[\sum_{k=1}^{K} \alpha_k \\
&\quad
\big(\bar e_k^{\pi_0}(x, \bm{a}(k)) - \bar e_k^{\pi}(x, \bm{a}(k))\big)\, r(x, \bm{a}(k))\Big]
\end{align*}

\medskip\noindent\textit{Proof of Proposition~\ref{prop:crossover}.}
\begin{align*}
\mathrm{MSE}_{\mathrm{IIPS}} - \mathrm{MSE}_{\mathrm{ED\text{-}DR}}
&= b_{\mathrm{IIPS}}^2 + \frac{\sigma_{\mathrm{IIPS}}^2}{n}
- \frac{\sigma_{\mathrm{EDDR}}^2}{n}
> 0 \\
&\iff
n > \frac{\sigma_{\mathrm{EDDR}}^2 - \sigma_{\mathrm{IIPS}}^2}{b_{\mathrm{IIPS}}^2}
\end{align*}
\qed


\begin{thebibliography}{10}
\providecommand{\url}[1]{\texttt{#1}}
\providecommand{\urlprefix}{URL }
\providecommand{\doi}[1]{https://doi.org/#1}

\bibitem{ih2019}
Agarwal, A., Zaitsev, I., Wang, X., Li, C., Najork, M., Joachims, T.:
  Estimating position bias without intrusive interventions. In: Proc. 12th ACM
  Int. Conf. Web Search and Data Mining (WSDM). pp. 474--482. Melbourne, VIC,
  Australia (Feb 2019)

\bibitem{beygelzimer2009}
Beygelzimer, A., Langford, J.: The offset tree for learning with partial
  labels. In: Proc. 15th ACM SIGKDD Int. Conf. Knowledge Discovery and Data
  Mining (KDD). pp. 129--138. Paris, France (Jun 2009)

\bibitem{dbn}
Chapelle, O., Zhang, Y.: A dynamic {Bayesian} network click model for web
  search ranking. In: Proc. 18th Int. Conf. World Wide Web (WWW). pp. 1--10.
  Madrid, Spain (Apr 2009)

\bibitem{debiassurvey}
Chen, J., Dong, H., Wang, X., Feng, F., Wang, M., He, X.: Bias and debias in
  recommender system: A survey and future directions. ACM Trans. Information
  Systems  \textbf{41}(3),  67:1--67:39 (2023)

\bibitem{dml}
Chernozhukov, V., Chetverikov, D., Demirer, M., Duflo, E., Hansen, C., Newey,
  W., Robins, J.: Double/debiased machine learning for treatment and structural
  parameters. The Econometrics Journal  \textbf{21}(1),  C1--C68 (2018)

\bibitem{clickmodels}
Chuklin, A., Markov, I., de~Rijke, M.: Click Models for Web Search. Synthesis
  Lectures on Information Concepts, Retrieval, and Services, Morgan {\&}
  Claypool Publishers (2015)

\bibitem{craswell2008}
Craswell, N., Zoeter, O., Taylor, M.J., Ramsey, B.: An experimental comparison
  of click position-bias models. In: Proc. 1st ACM Int. Conf. Web Search and
  Data Mining (WSDM). pp. 87--94. Palo Alto, CA, USA (Feb 2008)

\bibitem{dudik2014}
Dud{\'{\i}}k, M., Erhan, D., Langford, J., Li, L.: Doubly robust policy
  evaluation and optimization. Statistical Science  \textbf{29}(4),  485--511
  (2014)

\bibitem{dudik2011}
Dud{\'{\i}}k, M., Langford, J., Li, L.: Doubly robust policy evaluation and
  learning. In: Proc. 28th Int. Conf. Machine Learning (ICML). pp. 1097--1104.
  Bellevue, WA, USA (Jun 2011)

\bibitem{twotower}
Hager, P., Zoeter, O., de~Rijke, M.: Unidentified and confounded? understanding
  two-tower models for unbiased learning to rank. In: Proc. ACM SIGIR Int.
  Conf. Innovative Concepts and Theories in Information Retrieval (ICTIR). pp.
  347--357. Padua, Italy (Jul 2025)

\bibitem{horvitz1952}
Horvitz, D.G., Thompson, D.J.: A generalization of sampling without replacement
  from a finite universe. Journal of the American Statistical Association
  \textbf{47}(260),  663--685 (1952)

\bibitem{ultr}
Joachims, T., Swaminathan, A., Schnabel, T.: Unbiased learning-to-rank with
  biased feedback. In: Proc. 10th ACM Int. Conf. Web Search and Data Mining
  (WSDM). pp. 781--789. Cambridge, UK (Feb 2017)

\bibitem{lips}
Kiyohara, H., Nomura, M., Saito, Y.: Off-policy evaluation of slate bandit
  policies via optimizing abstraction. In: Proc. ACM Web Conf. (WWW). pp.
  3150--3161. Singapore (May 2024)

\bibitem{cascadedr}
Kiyohara, H., Saito, Y., Matsuhiro, T., Narita, Y., Shimizu, N., Yamamoto, Y.:
  Doubly robust off-policy evaluation for ranking policies under the cascade
  behavior model. In: Proc. 15th ACM Int. Conf. Web Search and Data Mining
  (WSDM). pp. 487--497. Tempe, AZ, USA (Feb 2022)

\bibitem{aips}
Kiyohara, H., Uehara, M., Narita, Y., Shimizu, N., Yamamoto, Y., Saito, Y.:
  Off-policy evaluation of ranking policies under diverse user behavior. In:
  Proc. 29th ACM SIGKDD Int. Conf. Knowledge Discovery and Data Mining (KDD).
  pp. 1154--1163. Long Beach, CA, USA (Aug 2023)

\bibitem{iips}
Li, S., Abbasi-Yadkori, Y., Kveton, B., Muthukrishnan, S., Vinay, V., Wen, Z.:
  Offline evaluation of ranking policies with click models. In: Proc. 24th ACM
  SIGKDD Int. Conf. Knowledge Discovery and Data Mining (KDD). pp. 1685--1694.
  London, UK (Aug 2018)

\bibitem{rips}
McInerney, J., Brost, B., Chandar, P., Mehrotra, R., Carterette, B.A.:
  Counterfactual evaluation of slate recommendations with sequential reward
  interactions. In: Proc. 26th ACM SIGKDD Int. Conf. Knowledge Discovery and
  Data Mining (KDD). pp. 1779--1788. Virtual Event, CA, USA (Aug 2020)

\bibitem{drultr}
Oosterhuis, H.: Doubly robust estimation for correcting position bias in click
  feedback for unbiased learning to rank. ACM Trans. Information Systems
  \textbf{41}(3),  61:1--61:33 (2023)

\bibitem{precup2000}
Precup, D., Sutton, R.S., Singh, S.: Eligibility traces for off-policy policy
  evaluation. In: Proc. 17th Int. Conf. Machine Learning (ICML). pp. 759--766.
  Stanford, CA, USA (Jun 2000)

\bibitem{richardson2007}
Richardson, M., Dominowska, E., Ragno, R.: Predicting clicks: Estimating the
  click-through rate for new ads. In: Proc. 16th Int. Conf. World Wide Web
  (WWW). pp. 521--530. Banff, AB, Canada (May 2007)

\bibitem{obp}
Saito, Y., Aihara, S., Matsutani, M., Narita, Y.: Open bandit dataset and
  pipeline: Towards realistic and reproducible off-policy evaluation. In: Proc.
  35th Conf. Neural Information Processing Systems (NeurIPS), Track on Datasets
  and Benchmarks. Virtual Event (Dec 2021)

\bibitem{mnar}
Saito, Y., Yaginuma, S., Nishino, Y., Sakata, H., Nakata, K.: Unbiased
  recommender learning from missing-not-at-random implicit feedback. In: Proc.
  13th ACM Int. Conf. Web Search and Data Mining (WSDM). pp. 501--509. Houston,
  TX, USA (Feb 2020)

\bibitem{rat}
Schnabel, T., Swaminathan, A., Singh, A., Chandak, N., Joachims, T.:
  Recommendations as treatments: Debiasing learning and evaluation. In: Proc.
  33rd Int. Conf. Machine Learning (ICML). vol. PMLR 48, pp. 1670--1679. New
  York, NY, USA (Jun 2016)

\bibitem{strehl2010}
Strehl, A.L., Langford, J., Li, L., Kakade, S.M.: Learning from logged implicit
  exploration data. In: Proc. 24th Conf. Neural Information Processing Systems
  (NeurIPS). pp. 2217--2225. Vancouver, BC, Canada (Dec 2010)

\bibitem{pi}
Swaminathan, A., Krishnamurthy, A., Agarwal, A., Dud{\'{\i}}k, M., Langford,
  J., Jose, D., Zitouni, I.: Off-policy evaluation for slate recommendation.
  In: Proc. 31st Conf. Neural Information Processing Systems (NeurIPS). pp.
  3632--3642. Long Beach, CA, USA (Dec 2017)

\bibitem{vlassis}
Vlassis, N., Chandrashekar, A., Gil, F.A., Kallus, N.: Control variates for
  slate off-policy evaluation. In: Proc. 35th Conf. Neural Information
  Processing Systems (NeurIPS). pp. 3667--3679. Virtual Event (Dec 2021)

\bibitem{regem}
Wang, X., Golbandi, N., Bendersky, M., Metzler, D., Najork, M.: Position bias
  estimation for unbiased learning to rank in personal search. In: Proc. 11th
  ACM Int. Conf. Web Search and Data Mining (WSDM). pp. 610--618. Marina Del
  Rey, CA, USA (Feb 2018)

\end{thebibliography}
\end{document}